\journal{XXX}
\newcommand{\conditional}[2]{\left\vert\vphantom{\frac{1}{1}}\right.}
\newcommand{\ghosteq}{\phantom{=}~~}
\newcommand{\slfv}{$\boldsymbol{\Lambda}$\textbf{V}}
\newcommand{\eq}[1]{Eq.~\ref{#1}}
\newtheorem{lemma}{Lemma}
\theoremstyle{definition}
\newtheorem{definition}{Definition}
\newtheorem{remark}{Remark}
\newtcolorbox{algbox}[3][]{
  width=\textwidth,colframe=black,colback=white,
  sharp corners, breakable,
  before={\captionof{algorithm}{#2}\label{#3}},
  #1
}
\renewcommand{\ALG@name}{Procedure}
\begin{document}

\begin{frontmatter}

\title{Rate of coalescence of pairs of lineages in the spatial $\lambda$-Fleming-Viot process}

\author{Johannes Wirtz\corref{mycorrespondingauthor}}
\cortext[mycorrespondingauthor]{Corresponding author}
\ead{jwirtz@lirmm.fr}

\author{St\'ephane Guindon\corref{}}
\ead{sguindon@lirmm.fr}
\begin{abstract}
We revisit the spatial $\lambda$-Fleming-Viot process introduced in \cite{barton:slfv}. Particularly, we are interested in the time $T_0$ to the most recent common ancestor for two lineages. We distinguish between the case where the process acts on the entire two-dimensional plane, and on a finite rectangle. Utilizing a differential equation linking $T_0$ with the physical distance between the lineages, we arrive at simple and reasonably accurate approximation schemes for both cases. Furthermore, our analysis enables us to address the question of whether the genealogical process of the model "comes down from infinity", which has been partly answered before in \cite{veber:slfv}.
\end{abstract}

\begin{keyword}
Spatial $\lambda$-Fleming-Viot process, partial differential equations, spatial population dynamics, genealogies 
\end{keyword}

\end{frontmatter}

\section{Introduction}

The spatial $\lambda$-Fleming-Viot process, first described in \cite{barton:slfv}, is a tool in population genetics to model evolution on a spatial continuum. This stochastic process arises as a natural extension of the processes of the Fleming-Viot type that have become part of population genetics since the 1980s (see \cite{fleming:fv,ethier:fv}). A Fleming-Viot process is a measure-valued process $(\mu_t)_{t>0}$ in continuous time, where  $\mu_t$ is a probability measure on a locally compact probability space $E$ (one writes $\mu_t\in\mathcal{M}_1(E)$, with $\mathcal{M}_1(E)$ denoting the set of probability measures on $E$). The interesting feature of this is that the elements of $E$, the allelic "type space" of the model, can be uncountably infinite. For instance, $E$ could be defined as an interval of the real numbers, and the type of an individual would be given by a real number from that interval, which could be taken to represent quantitative traits such as height or weight \cite{fleming:fv,ethier:fv,ohtakimura:mutation}. Classical population models such as the Wright-Fisher model typically incorporate only finitely or at most countably many different allelic types, and so do their diffusion limits; but it can be shown that under suitable conditions, for instance, a sequence of Wright-Fisher models of size $N$ whose initial distibutions converge to some probability measure $\mu_0$ on $E$, has a (weak) limit in the form of a Fleming-Viot process with values in $\mathcal{M}_1(E)$ \cite{ethier:fv}.\\
It has been shown \cite{ethier:fv,donnelly:slfv} that the approach of modeling populations via Fleming-Viot processes was also robust to incorporating not only classical evolutionary mechanisms like selection, recombination and mutation , but also generalisations of the standard reproduction mechanisms that are present in the Wright-Fisher model. In particular, it is possible to incorporate ``extreme'' reproduction events, in which large portions of the population are replaced by the offspring of one single individual. More precisely, ``extreme'' means that the variance of the number of individuals affected by such a reproduction event tends to infinity as the population size increases. Under suitable conditions, there exist Fleming-Viot-type processes representing the diffusion limits of such population models \cite{griffiths:lfv,birkner:lfv,eldon:psi}. These are the so-called $\lambda$-Fleming-Viot processes; their dual processes are represented by the $\lambda$-coalescents \cite{pitman:coalescent,satigov:coalescent,berestycki:coaltheory}, a class of processes generalizing Kingman's coalescent and enabling more than two lineages to collide at the same time (``$k$-mergers''). In this context, $\lambda$ denotes a probability distribution on $[0,1]$, by which the rate of occurrence of $k$-mergers in a sample genealogy, $k\geq 2$ is determined.\\
The spatial $\lambda$-Fleming-Viot process (from here on, \slfv) is a population model in which the type of an individual is determined on an allelic level as well as by its physical location. Suppose the set of allelic types in the population is given by some set $\mathbb{K}$, and the population inhabits some metrizable two-dimensional object $H$ called \textit{habitat}. The most well-studied cases in the existing literature are $H=\mathbb{R}^2$ (e.g. the seminal \citep{barton:slfv}), $H=\mathbb{T}$ (where $\mathbb{T}$ is a torus, \cite{etheridge:slfvtorus}) and $H=\mathcal{A}$, with $\mathcal{A}$ denoting a rectangular object \citep{guindon:slfv}. The state space space is given by $E=H\times \mathbb{K}$ (hence, an individual is characterized by allelic state and physical location; \cite{veber:slfv}). An infinite number of individuals are located at each location $z\in H$ and  $\rho(z,t)\in\mathcal{M}_1(\mathbb{K})$ denotes the distribution of types at location $z$ and time $t$ (This is referred to as the hight-population density limit in \cite{barton:slfv,barton2013}). The proportions of types at each location is modified over time by randomly occurring events facilitating reproduction and death (``REX events'' \cite{guindon:slfv}). These events occur at points $z\in H$ according to a Poisson process of intensity $\lambda>0$. There are several ways for an event to affect the population; the two most common ones are the following:
\begin{definition}\label{def:slfv}
\hfill\\
\begin{enumerate}
\item Let an event occur at time $t$ and position $z$. A number $r$ is drawn from a distribution with variance $\theta^2<\infty$ and a parental location $w$ is sampled uniformly from the "disc" $B_r(z):=\{y:\|y-z\|<r\}$. Then, a type $k\in\mathbb{K}$ is chosen from $\rho(w,t^-)$. At each position $y\in B_r(z)$, the type distribution after the event is given by 
\begin{equation}\rho(y,t^+)=\delta_{k}u_0+\rho(y,t^-)(1-u_0),\end{equation}
where $\delta_k$ is the Dirac measure on the type $k$. This is called the \textit{disc-based} version of the process \cite{barton:slfv}.
\item Let an event occur at time $t$, position $z$, and choose the parental location $w$ according to the gaussian density 
$$\frac{1}{2\pi\theta^2}\exp\left(-\frac{\|z-w\|}{2\theta^2}\right)\mathrm{d}w$$ Choose a type from $\rho(w,t^-)$ as before. At each position $z'\in H$, the composition of the population after the event becomes 
\begin{equation}\label{eq:gaussian}\rho(y,t^+)=\delta_{k}p(z',z)+\rho(z',t^-)(1-p(z',z)),\end{equation}
with 
\begin{equation}\label{eq:gaussianb}p(z',z):=u_0\exp(-\|z'-z\|^2/(2\theta^2)),\end{equation}
This is called the \textit{gaussian} version of the process \cite{barton:slfv}. 
\end{enumerate}

\end{definition}

For the remainder of this work, we will consider the gaussian model. The parameter $\theta^2$, called \textit{dispersal variance} (or \textit{spatial variance}), is obviously a major determinant of the model. The \textit{mortality} $u_0$ controls the impact of a REX event locally. REX events are ``extreme'' in the sense of Fleming-Viot processes, as the amount of the population that is replaced in a single event is of positive mass. Genealogies sampled from a population evolving in this way are not necessarily reproducible by Kingman's coalescent (for instance, due to multiple mergers). On the other hand, considering the limits of the parameters $\lambda$ and $\theta$, it can be shown that the \slfv converges to a $\lambda$-coalescent under suitable conditions, and in some cases even to Kingman's coalescent \cite{barton2013}. It should be noted, though, that $\lambda$ has a different meaning in the context of the \slfv. Also, note that in \cite{limic:slc}, a slightly different model is considered despite the similarity in terminology.\\
The \slfv~possesses several favourable properties to model evolution in space; duality results and backwards-in-time formulations have been described previously \cite{barton:slfv,veber:slfv}, and it is not subject to the "clumping" issues \cite{felsenstein:torus} observed in the classical Wright-Mal\'ecot model that also describes the evolution of organisms spatially distributed along a continuum \cite{wright:ibd,malecot:heredite}.
One application of this model is the inference of the parameters $\lambda,\theta$ and $u_0$ from geo-referenced genetic data, because these parameters allow an assessment of the speed at which genetic variation disseminates across a given habitat, or also how fast newly reached one can be conquered. For example, in \cite{guindon:slfv}, estimation is conducted using a Markov Chain Monte Carlo approach. This approach relies on an extensive parameter augmentation approach in order to calculate the likelihood of spatial coordinates along a genealogy. Although standard Metropolis-Hastings operators apply here, full Bayesian inference is computationally intensive. Alternative inference approaches, based on pairwise coalescence for instance, are therefore required and motivated the work presented here.\\
More specifically, the purpose of our work is to gain insight on the genealogical process within the \slfv, i.e., the genealogical structure of a sample from a population that evolves according to \slfv-mechanisms, in order to improve the efficiency and precision of Bayesian methods like the one mentioned above. The most basic case of a sample of size $n=2$ has been discussed to some extent in \cite{barton:slfv}. The genealogical space for $n=2$ is entirely described by the \textit{time to coalescence} $T_0$ of the two lineages given their initial distance $d_0$ and the location of their most recent common ancestor $X_{T_0}$. Arguably, $T_0$ is the more significant quantity of the two, since it relates to the amount of variation, as well as to the speed of the reproductive mechanism. In \cite{barton:slfv}, a link between $T_0$ and the probability of identity by descent is utilized to obtain a formula for $T_0$; however, its evaluation requires costly numerical integration on $\mathbb{C}$ and the use of nontrivial functions.\\
In this work, we will take a look at $T_0$ from a different angle and describe computationally feasible ways of approximating it. While this certainly will have to be extended to larger sample sizes in the future (with which the presented methodology may be helpful), even analyses based upon pairs of samples have proven to be effective in practice (e.g. the whole literature on Tajima's $D$ \cite{tajima:d}, or, more recently, \cite{barroso:reclandscape}). Indeed, being able to describe the situation for $n=2$ already enables a statistical assessment of geo-referenced genetic data. Additionally, our approach to this problem sheds some light on other features of the process; for instance, we can answer the question whether the \slfv~"comes down from infinity" negatively (which is in line with a similar result obtained in \cite{veber:slfv}.\\
In sections 2  we will review the dynamics of the \slfv. We will consider the case $H=\mathbb{R}^2$ as well as $H=\mathcal{A}$, since the latter relevant from a practical point of view. Afterwards, we will describe the distance process (denoted by $(Z_t)_{t\geq 0}$) between two lineages backward in time under the \slfv~dynamics. Importantly, we will see that $Z_t$ is linked to the distribution of $T_0$ via its moments. Section 5 is devoted to describing numerical approximations.

\section{\textnormal{\slfv}~Dynamics}\label{sec:dynamics}

We first take a look at the model in a finite-habitat setting. We will see that in letting the habitat size tend to infinity, one naturally recovers the original model of \cite{barton:slfv}. 
As per usual with coalescent processes, we consider that time is running backward, i.e. $t>0$ corresponds to a point in time $t$ units of time in the past compared to the origin (where $t=0$). The habitat is defined by a rectangle $\mathcal{A}$ of width $w$ and height $h$. $|\mathcal{A}|=w\cdot h$ denotes its area. Let $\lambda$ denote the intensity of a Poisson process governing the frequency at which REX events take
place. In a time interval of length $h$, there is a probability 

\begin{equation}
\Pr(N_h=k|\alpha)=\frac{\alpha^k}{k!}\exp(-\alpha),
\end{equation}
with $\alpha := \lambda |\mathcal{A}|h$, that the  number $N_h$ of REX events on $\mathcal{A}$ is $k$.\\
The center of a REX event (denoted by the random variable $Z$) is uniformly distributed on
$\mathcal{A}$, i.e., the density of $Z$ is $p_Z(z) = 1/|\mathcal{A}|$. We have
\begin{equation}\label{eq:poisson1}
\lim_{h\rightarrow 0}\Pr(N_h=1)h^{-1}=\lambda|\mathcal{A}|
\end{equation} 
and 
\begin{equation}\label{eq:poisson2}
\lim_{h\rightarrow 0}\Pr(N_h=k)h^{-1}=0
\end{equation}
for $k>1$. The quantity on the right-hand side of \eq{eq:poisson1} is the rate of events of the process, i.e., the waiting times between events are exponentially distributed with parameter $\lambda|\mathcal{A}|$. If a REX event occurs at some position $z\in\mathcal{A}$ and time $t\geq 0$,  the spatial composition of the population is altered in the way described in Eq.~\ref{eq:gaussian}.

Below, we introduce new notations and fundamental quantities that will be used throughout the article.
\begin{definition}\label{def:lineages}

\begin{enumerate}
\item A \textit{lineage} $X=(X_t)_{t\geq 0}$ is the stochastic process of the location $X_t\in\mathbb{R}^2$ of the ancestor of an individual located at $x_0$ in the present, that lived $t$ units of time in the past.
\item Given two lineages $X,Y$ at an initial distance $d_0$, let the random variable $T_0$ denote the time at which $X$ and $Y$ coalesce, i.e. $T_0\in(0,\infty]$. 
\item For two lineages $X,Y$, we let 
$$D_t:=\|X_t-Y_t\|^2$$
denote the random variable describing the squared euclidean distance between the lineages at time $t$. For $T_0\leq t$, we define $D_t=0$.
\item Similarly, we let
$$Z_t:=\|X_t-Y_t\|^2/(4\theta ^ 2)$$
denote the distance between the lineages at time $t$ standardized with respect to the rate of dispersal.
\item Occasionally, we will denote by $\mathcal{Z}_t$ the random variable obtained by "conditioning" $Z_t$ on $T_0>t$. Formally, let $f_{Z_t}(x)$ denote the density of the random variable $Z_t$ evaluated at $x$. Then $\mathcal{Z}_t$ has density
$$
f_{\mathcal{Z}_t}(x)=\begin{cases}
f_{Z_t}(x)/\Pr(T_0>t\conditional~{} d_0) & x>0\\
\delta_0\cdot \Pr(Z_t=0 \cap T_0>t\conditional~{} d_0)/\Pr(T_0>t\conditional~{} d_0) & x=0
\end{cases}
$$
where $\delta_0$ denotes a Dirac measure at $0$.
\end{enumerate}
\end{definition}
Consider a lineage $X$ located at $X_t=x_t\in\mathcal{A}$ at time $t>0$. A ``jump'' of $X$ is its movement to an updated (``older'') ancestral position when it is affected (``hit'') by an event. Assume that an event takes place at time $t$. The probability that $X$ is hit by the event can be obtained as follows:
\begin{equation}\label{eq:rect-hit}
\Pr(X\textnormal{ hit by the event})=\frac{1}{|\mathcal{A}|}\int_{z \in \mathcal{A}}u_0 \exp\big(-\|z-x_t\|^2/2\theta^2\big) \mathrm{d}z ,
\end{equation}
integrating the right-hand side of Eq.~\ref{eq:gaussianb} over all possible locations $z$ for the event center. Since $\mathcal{A}$ is finite, the integral can be calculated, although it involves the error function.\\
Given two lineages $X$ and $Y$ with locations $X_t=x_t$ and $Y_t=y_t$, the probability that they are hit by the same REX event, i.e., the probability that they coalesce, is obtained as follows:
\begin{equation}\label{eq:rect-coal}
\Pr(X,Y\textnormal{ hit by the event})=\frac{1}{|\mathcal{A}|}\int_{z \in \mathcal{A}}u_0^2\exp\left(-\frac{\|x_t-z\|^2+\|y_t-z\|^2}{2\theta^2}\right) \mathrm{d}z
\end{equation}
Making use of \eq{eq:rect-hit}, we can calculate the rate $\rho_{X}$ at which lineage $X$ located at $X_t=x_t$ gets hit by a REX event: 
\begin{align}
\notag\rho_{X}&=\lim_{h\rightarrow 0}\Pr(x_t\textnormal{ hit by any event in an interval of length }h)\cdot h^{-1}\\
\notag&=\lim_{h\rightarrow 0}\Pr(N_h=1)h^{-1}\frac{1}{|\mathcal{A}|}\int_{z \in \mathcal{A}}u_0 \exp\big(-\|z-x_t\|^2/2\theta^2\big) \mathrm{d}z\\
&=\lambda \int_{z \in \mathcal{A}}u_0 \exp\big(-\|z-x_t\|^2/2\theta^2\big) \mathrm{d}z
\end{align}
because of \eq{eq:poisson1} and \eq{eq:poisson2}.
Similarly, the rate of coalescence between two lineages located at $x_t$ and $y_t$ is obtained as follows:
\begin{equation}
\label{eq:coalrectangle}\rho_{X\wedge Y}=\lambda \int_{z \in \mathcal{A}}u_0^2\exp\left(-\frac{\|z-x_t\|^2+\|z-y_t\|^2}{2\theta^2}\right) \mathrm{d}z
\end{equation}

We now consider $\mathbb{H}=\mathbb{R}^2$. Again, REX events are generated by a Poisson point process of intensity $\lambda>0$. This means that on any Borel set $U\subset \mathbb{R}^2$ of finite measure $|U|$, the number of REX events encountered on $U$ in an time interval of length $h$ is Poisson-distributed with parameter $\lambda|U|h$. If at time $t$, a lineage $X$ is located at position $x_t\in\mathbb{R}^2$, the rate at which a REX event appears and affects this lineage can be calculated as the limit of the same rate on a rectangle, letting its size tend to infinity  (we  write $\lim_{|\mathcal{A}|\rightarrow\infty}$ assuming that both $w$ and $h$ become infinite):
\begin{align}
\notag&\rho_{X}^*=\lim_{|\mathcal{A}|\rightarrow\infty}\lambda \int_{z \in \mathcal{A}}u_0 \exp\big(\text{-}\|z-x_t\|^2/2\theta^2\big) \mathrm{d}z\\
\notag&=\lambda\int_{\mathbb{R}^2}u_0 
\exp\big(\text{-}\|z-x_t\|^2/2\theta^2\big) \mathrm{d}z\\
&=2\pi\theta^2 u_0\lambda \\
\label{eq:jumprectangle}&:=\Delta \lambda,
\end{align}
where $z$ denotes the location of the event.\\
When the ancestral lineage located at $x_{t^-}$ is hit by an event taking place at $z$ and time $t$, it changes its position to  $X_{t^+}$. This random variable is approximately distributed as a bivariate normal with mean $z$, and covariance matrix $\theta^2 \mathbf{I}$. As noted in \citep{barton2013}, the normal approximation becomes exact in the limit of high population density, which is the case that we are considering in the present study.\\
The ancestral process of a single lineage can be thought of as that of a particle on the plane changing position according to a Poisson process. If two lineages $X$ and $Y$ are considered, then they both move through the plane with the possibility of a coalescence. The coalescence rate can again be retrieved as the coalescence rate  on the rectangle in the limit of an infinite size:
\begin{align}\label{eq:coalrate}
\rho_{X\wedge Y}^*&=\pi\theta^2 u_0^2\lambda \exp\left(-\frac{\|x_t-y_t\|^2}{4 \theta^2} \right)\\
&=\frac{\Delta\lambda}{2}u_0\exp\left(-\frac{d_t}{4 \theta^2} \right)
\end{align}
where $d_t$ is the value of the squared distance $D_t$ between $X$ and $Y$ at time $t$. Note that we could also replace $\frac{d_t}{4 \theta^2}$ by $z_t$ in this and the following formulae, where $z_t$ is the value taken by the random variable $Z_t$ (see Definition~\ref{def:lineages}). In any case, on $\mathbb{R}^2$, the rate of coalescences between two lineages at some time $t$ depends only on the squared euclidean distance $D_t$ rather than the locations themselves.\\
While each of the two lineages is hit at a rate  $\Delta\lambda $, the total rate of events (hitting either $X$ or $Y$) is
\begin{equation}\label{eq:totalrate}
\rho^*_{X\vee Y} = 2\Delta\lambda\left(1-\frac{u_0}{4}\exp\left(-\frac{d_t}{4 \theta^2}\right)\right)
\end{equation}
The reason for this is that if we were to simply add the rates associated with each lineage, we would put double weight on the events affecting both lineages simultaneously, i.e., the coalescences.\\
One should also note that even though the distribution of $X_{t^+}$ posterior to a REX event of center $z$ is normal with variance $\theta^2$, this is not the case if we consider the distribution of $X_{t^+}$ posterior to a REX event and assuming that the same event did \textit{not} hit $Y$. The probability density of $X_{t^+}$ in such a case can be written down as follows:
\begin{align}\label{eq:nocoal-conditional}
\notag &p_{\{X_{t^+}\conditional~{}!Y\}}(w)\\
=&\bigint_{\mathbb{R}^2}\frac{\exp\left(\text{-}\frac{\|z-x_t\|^2}{2\theta^2}\right)-\exp\left(\text{-}\frac{\|z-x_t\|^2+\|z-y_t\|^2}{2\theta^2}\right)}{\Delta^2\left(1-\frac{1}{2}\exp\left(\text{-}\frac{d_t}{4 \theta^2}\right)\right)}\exp\left(\text{-}\frac{\|z-w\|^2}{2\theta^2}\right)\mathrm{d}z
\end{align}
where, with a slight abuse of notation, we signify by writing $!Y_t$, that the event does not affect the lineage $Y$. Inspection of the right-hand side reveals that the exponential function involving $d_t$ is only of significant magnitude in comparison to the leading term if $x_t$ and $y_t$ are close, or if $\theta^2$ is large. If on the other hand $\theta^2$ is small or the lineages can be assumed to be sufficiently distant from each other, the conditional distribution above remains well-approximated by a normal.\\
The \slfv~on the rectangle and on the plane are very similar in the initial stages. As time progresses, boundary effects come into play on the rectangle, while lineages can expand indefinitely on $\mathbb{R}^2$. This is illustrated in Figure~\ref{fig:trajectories}, where the average squared euclidean distance between two lineages is shown for both cases as time progresses.
Also, on the rectangle, one can see that the average distance approaches some equilibrium value, while in the plane, the distance seems to grow almost linearly. This will be verified in Section~\ref{sec:asymp}.\\
Since $T_0$ on $\mathbb{R}^2$ depends only on $D_t$, it suffices to simulate $D_t$, or alternatively, the "standardized" $Z_t$. Figure~\ref{fig:trajectories} depicts 10000 such trajectories of $Z_t$ on $\mathbb{R}^2$. Individual trajectories resemble Brownian motions with a drift term of strength $2\Delta$.
\begin{figure}
\includegraphics[scale=0.75]{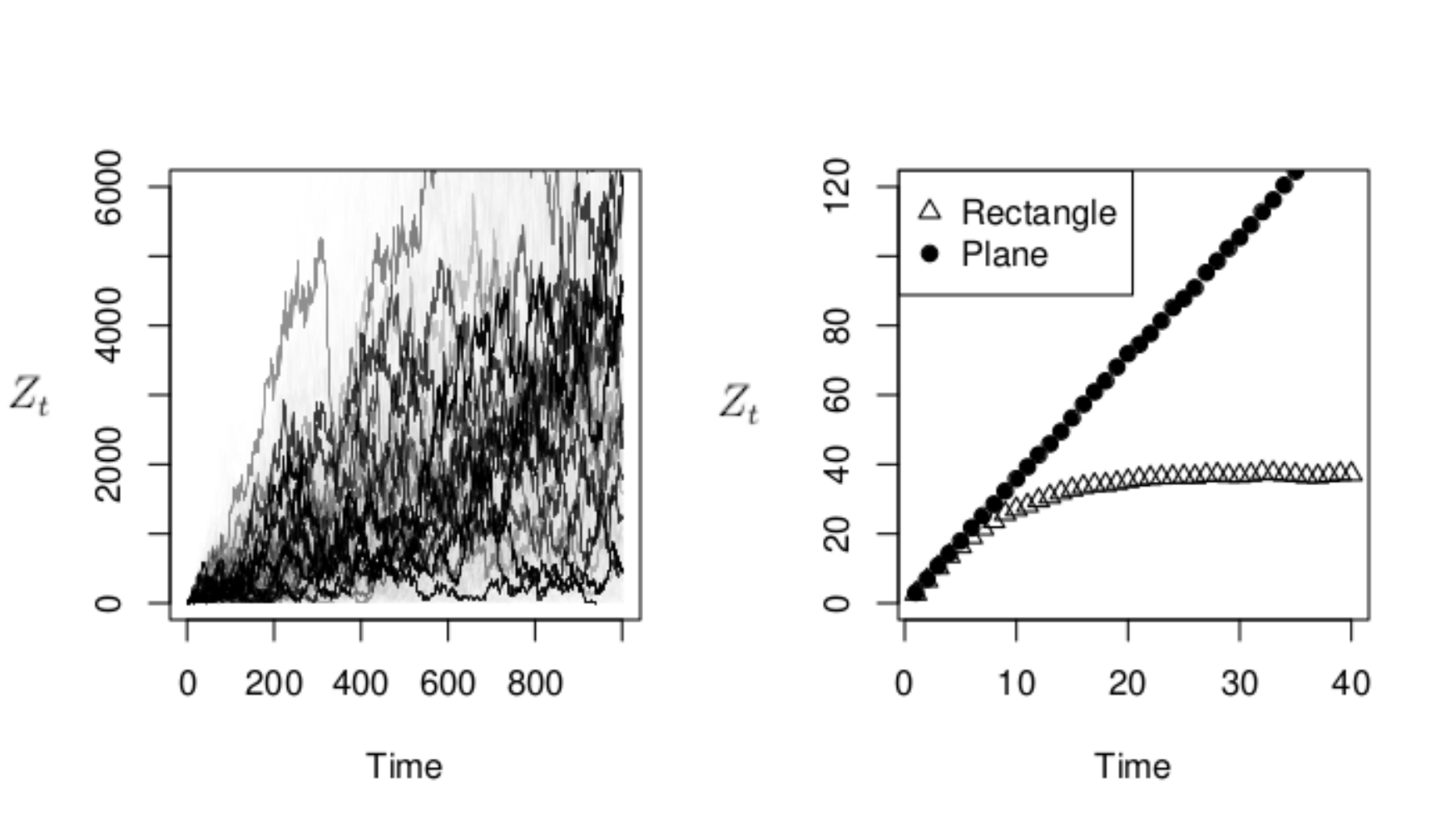}
\caption{Left: Trajectories of $Z_t$ on $\mathbb{R}^2$, using $\lambda=1,u_0=1,\theta^2=1/4,Z_0=0$, over 1000 units of simulated time. Right: the average $Z_t$ over time on $\mathbb{R}^2$ displays a linear trend, on the rectangle it approaches an equilibrium value. In both cases, trajectories undergoing a coalescence were ruled out when generating the plots.}
\label{fig:trajectories}
\end{figure}
We may approximate the cumulative distribution of $T_0$ from such simulations of $D_t$ (see Figure~\ref{fig:coalprob}). Increasing the dispersal variance $\theta^2$ seems to accelerate the process in the long term (however, this effect is different from increasing $\lambda$, which acts as a scaling parameter). Changing the initial distance $d_0$ appears to affect the limiting probability of coalescence $1-p^*$. For large $t$, the distributions seem to run in parallel. Sections \ref{sec:plane} and \ref{sec:approx} will be devoted to finding numerical approximations of these curves.

\begin{figure}
\includegraphics[scale=.75]{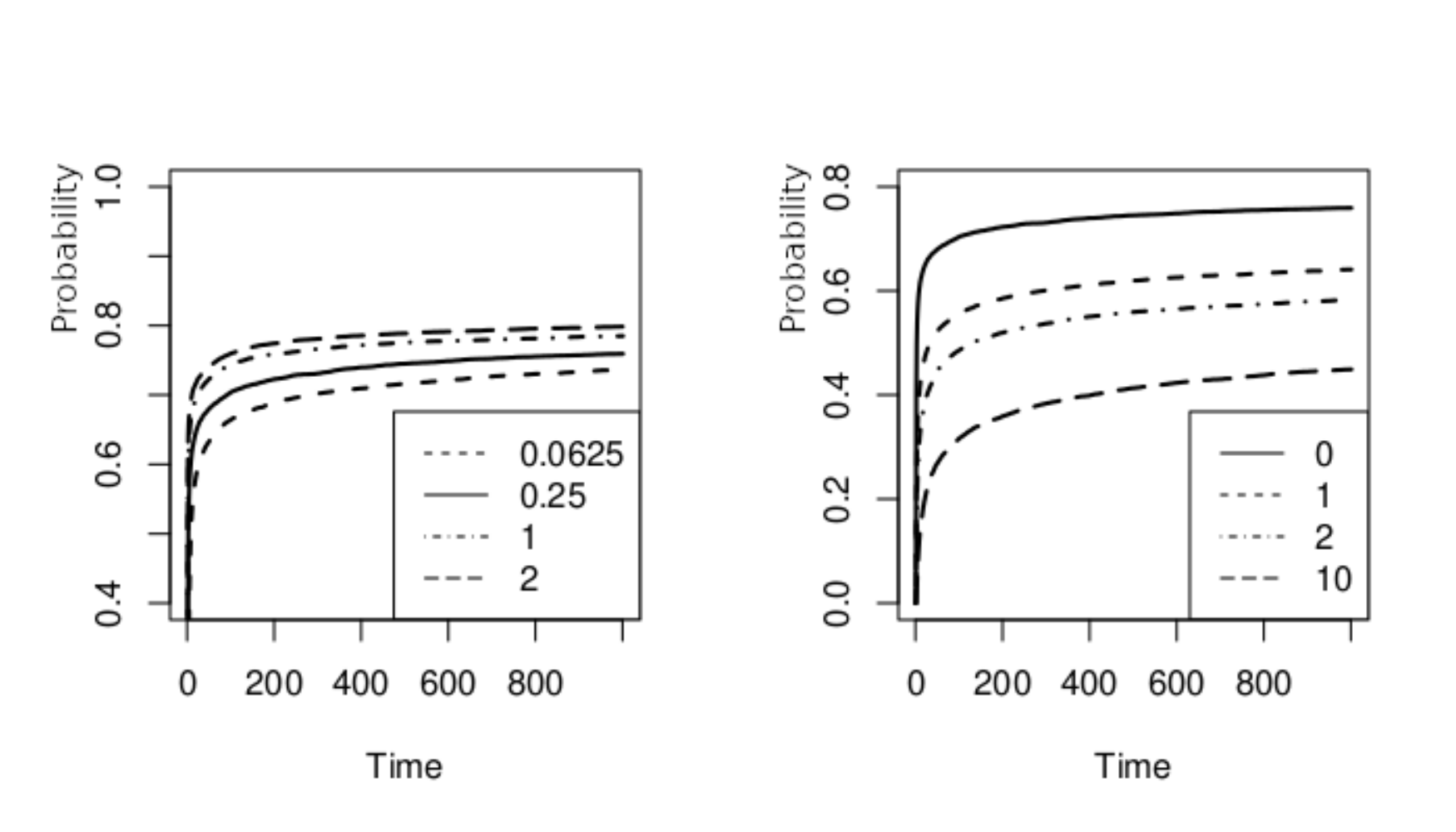}
\caption{\textit{Left:} Coalescence probability for varying values of $\theta^2$. Increasing $\theta^2$ has a similar effect as speeding up time. According to these simulations, $p^*\approx 0.2$ for $d_0=0$. \textit{Right:} Coalescent probability for $\theta^2=1/4$ and $d_0=0,1,2,10$. Again, $\lambda=1$ and $u_0=1$ were used.}
\label{fig:coalprob}
\end{figure}

\section{Derivation of the coalescence probability}\label{sec:coaltime}
In the following, we describe two ways of approaching the distribution of $T_0$ by symbolical methods. We first assume that the habitat is $\mathbb{R}^2$.
\subsection{The distribution as a solution to an \textit{ODE}} 

We consider the probability that coalescence between the two ancestral lineages that started at distance $d_0$ takes place before or at time $t+h$. We may write

\begin{eqnarray*}
&~&\Pr(T_0\leq t+h|d_0)\\
&=&\Pr(T_0\leq t|d_0)+\Pr(t< T_0\leq t+h|d_0)\\
&=&\Pr(T_0\leq t|d_0)+\Pr(T_0\leq t+h|d_0,T_0>t) (1-\Pr(T_0\leq t|d_0))
\end{eqnarray*}

which, by rearranging and considering the limit of $h\rightarrow 0$, can be transformed into

\begin{equation}\label{eq:coalode}\frac{\frac{\partial}{\partial t}\Pr(T_0\leq t|d_0)}{1-\Pr(t\leq T_0|d_0)}=\lim_{h\rightarrow 0}\frac{1}{h}\Pr( T_0\leq t+h|d_0,T_0> t)\end{equation}

Integrating and taking exponentials, we arrive at
\begin{equation}\Pr(T_0\leq t|d_0)=1-\exp\left(-\lim_{h\rightarrow 0}\frac{1}{h}\int_0^{t}\Pr(T_0\leq u+h|d_0,T_0>u))\mathrm{d}u\right)\label{eq:1strep}\end{equation}
Let $f_u(\cdot|T_0>u,d_0)$  denote the density of $D_u$, conditioned on $T_0>u$ and the distance at $t=0$ being equal to $d_0$. Then we have:\\
\begin{equation}\label{eq:pre-ode}
\Pr(u< T_0\leq u+h|d_0,T_0>u)=\int_0^{\infty}\Pr(T_u\leq u+h|d_u=x)f_u(x|T_0>u,d_0)\mathrm{d}x
\end{equation} 
The limit in \eq{eq:1strep} may be moved inside the integral, and $\lim_{h\rightarrow 0}\frac{1}{h}\Pr(T_0\leq h|d_0=x)$ is given by the right-hand side of equation~\ref{eq:coalrate}. Therefore, we have

\begin{eqnarray}\label{eq:coalrate-ode}
\Pr\left(T_0\leq t|d_0\right)=1-\exp\left(-\frac{1}{2}\Delta \lambda u_0\int_0^{t} \mathbb{E}\left(\exp\left(-Z_u\right)\conditional~{} T_0>u,d_0\right)\mathrm{d}u\right)
\end{eqnarray}
so that
\begin{eqnarray}
\log\Pr(T_0 > t|d_0)=-\frac{1}{2}\Delta \lambda u_0 \left(\int_0^{t} \mathbb{E}\left(\exp\left(-Z_u\right)\conditional~{} T_0>u,d_0\right)\mathrm{d}u\right)
\end{eqnarray}
It is worth pointing out that one may exchange the integrals on the right-hand side, allowing us to take the integral with respect to $t$ over the density of $Z_t$ alone:

\begin{eqnarray*}
\int_0^{t} \mathbb{E}\left(\exp\left(-Z_u\right)\conditional~{} T_0>u,d_0\right)\mathrm{d}u &=& 
\int_0^{t} \int_{0}^{\infty} \exp(-x) p_{Z_u}(x|T_0>u,d_0) \mathrm{d}x \mathrm{d}u \\
&=&\int_{0}^{\infty} \int_0^{t}  \exp(-x) p_{Z_u}(x|T_0>u,d_0) \mathrm{d}u \mathrm{d}x \\
&=&\int_{0}^{\infty} \exp(-x) \int_0^{t}  p_{Z_u}(x|T_0>u,d_0) \mathrm{d}u \mathrm{d}x
\end{eqnarray*}
Hence, the probability that coalescence takes place before a given point in time given the initial distance between the two lineages can be understood as the integral, taken over the time period considered, of the moment-generating function of the standardized distance (term to the left of the equality sign in the equation above). It can also be understood as the moment-generating function of the random variable $\mathcal{Z}_t$ (see Definition~\ref{def:lineages}).

\subsection{$T_0$ as a Cox Process}
In this section, we provide a different take on the same problem by decomposing the \slfv~into two stages: 
\begin{enumerate}
\item In the first stage, $X$ and $Y$ (with fixed $x_0, y_0$) move across $\mathbb{R}^2$, with the rate of events given by $\rho_{X\vee Y}^*-\rho_{X\wedge Y}^*$; every event affects either $X$ or $Y$ and \eq{eq:nocoal-conditional} is used to update the positions of lineages (rather than a normal density). This induces a random path of the squared euclidean distance between them in $\mathbb{R}_0^+$, i.e. a random piecewise-constant function $D'_t:=\|X_t-Y_t\|^2, t\in\mathbb{R}_0^+$ 

\item Along $\mathbb{R}_0^+$, "potential coalescent events" are distributed according to a non-homogenous Poisson process (i.e., a Cox Process \cite{cox:process}) with rate function 
\begin{equation}\rho(t):=\frac{1}{2}\Delta \lambda u_0 \exp\left(-\frac{D'_t)}{4\theta^2}\right)\end{equation}
The first potential coalescent event encountered along $\mathbb{R}_0^+$ finally represents the actual coalescent event of the lineages.
\end{enumerate}
The rates at which lineages change locations or coalesce are equal to the rates under the \slfv; therefore the above is an equivalent description of the process for two lineages. Obviously, we may also simulate the process in this way.\\
Given a trajectory $\delta(t)$ of distances between lineages over time, the probability distribution of the time $T_0$ until we encounter a coalescence is given by
\begin{equation}\label{eq:coaldist-pathwise}\Pr(T_0\leq t|\{\delta(u),0\leq u \leq t\})=1-\exp\left(-m(t)\right),\end{equation}
where $m(t) := \int_0^t \rho(u)\mathrm{d}u$.
The probability distribution of $T_0$ under the \slfv~thus equals equation~\ref{eq:coaldist-pathwise} averaged over all possible paths $\delta|_{[0,t]}$ between $0$ and $t$, i.e.
\begin{equation}\label{eq:coaldist-pathwise-avg}\Pr(T_0\leq t|d_0)=1-\mathbb{E}\left(\exp\left(-m(t)\right)\right)\end{equation}
Note that the random variable of this expression is $-\int_0^t\frac{1}{2}\Delta \lambda u_0 \exp\left(-\frac{\delta(u)}{4\theta^2}\right)\mathrm{d}u$, so the expectation here is taken over time as well as over space.

\begin{remark}
The similarity between equations~\ref{eq:coalrate-ode} and~\ref{eq:coaldist-pathwise-avg} suggests that one may interchange expectation and exponential. Note however that the expectation in~\eq{eq:coaldist-pathwise-avg} is taken over all paths generated in the first stage of the \slfv, where coalescence events are not taken into account, whereas in~\eq{eq:coalrate-ode} the expectation is conditioned on coalescence events {\it not} taking place up to $u$ in the original process.
\end{remark}

\subsection{The coalescence process on a rectangle}
Now, we consider again the case where the habitat is given by a rectangle $\mathcal{A}$. We can, in fact, derive a slightly modified version of \eq{eq:coalrate-ode}. Here, the coalescence probability depends on the lineage position relative to the border of $\mathcal{A}$, which is why one needs to condition on $X_t$ and $Y_t$ (rather than $D_t$ or $Z_t$). More precisely, \eq{eq:pre-ode} becomes
\begin{align}
&\ghosteq\Pr(u< T_0\leq u+h|x_0,y_0,T_0>u)\\
\notag&=\int_0^{\infty}\Pr(T_u\leq u+h|X_u=x,Y_u=y,T_0>u,x_0,y_0)\\
\notag&\ghosteq \cdot f_u(x,y|T_0>u,x_0,y_0)\mathrm{d}(x,y)
\end{align}
where $f_u(X_u,Y_u|T_0>u,x_0,y_0)$  denotes the joint density of $X_u,Y_u$ conditioned on $T_0>u$ and $x_0,y_0$, and we can again move the limit into the integral and evaluate it, with \eq{eq:coalrectangle} substituted for \eq{eq:coalrate}. The result is
\begin{align}
\notag&\ghosteq\log\left(\Pr(T_0> t|x_0,y_0)\right)\\
\label{eq:coalrate-ode-rect}&=-\int_0^{t}\mathbb{E}\left(\lambda u_0^2 \exp\left(-\frac{\|X_u-Z\|^2+\|Y_u-Z\|^2}{2\theta^2}\right)\conditional~{} T_0>u,x_0,y_0 \right)\mathrm{d}u
\end{align}
where $Z$ is an event location uniformly distributed on $\mathcal{A}$, and the expectation is taken over $X_u$, $Y_u$ and $Z$.\\
Since the conditional expectation $\mathbb{E}\left(\exp\left(-\frac{\|X_t-Z\|^2+\|Y_t-Z\|^2}{2\theta^2}\right)\conditional~{} T_0>t,x_0,y_0 \right)$ is bounded, it has to approach a limiting value $c$, with $1\geq c>0$ as $t \to \infty$. If $t$ is large,  we have the following approximation for the density of coalescence times:
\begin{equation}\label{eq:coalrate-rectangle}
\frac{\partial}{\partial t}\Pr(T_0\leq t|x_0,y_0)\propto\exp\left(-\lambda u_0^2ct\right)
\end{equation}
For large $t$, the density is thus proportional to that of an exponentially distributed random variable with parameter $\lambda u_0^2c$. 
More generally, the joint distribution of $X_t$ and $Y_t$ conditioned on $T_0>t$ approaches a quasi-stationary distribution \cite{bartlett:popmodels}. Simulations suggest that it resembles the uniform distribution on $\left(\mathcal{A}\times \mathcal{A}\right)$.\\
While one can evaluate the value of $c$ numerically, we point out that it is also approximated by the equivalent term of \eq{eq:coalrate-ode}, i.e.
$$c\approx\frac{\Delta}{2}\mathbb{E}\left(\exp\left(-Z_t\right)\conditional~{} T_0>t,x_0,y_0 \right)$$
which relates $c$ it to the adjusted distance $Z_t$. Assuming $X_t,Y_t$ are independent and uniformly distributed on $\mathcal{A}$, and making use of a result presented in \cite{philip2007}, the distribution of $Z_t$ is given by
\begin{eqnarray*}
p_{Z_{t}}^*(x|T_0>t,d_0) = 
4\theta^2 \times 
\left\{
\begin{array}{ll}
-2 \frac{\sqrt{d}}{w^2h} - 2 \frac{\sqrt{d}}{wh^2} + \frac{\pi}{wh} + \frac{d}{w^2h^2},& \\ 
\hspace{0.35\textwidth} \text{if} \quad 0 < d \leq w^2 & \\
&\\
-2 \frac{\sqrt{d}}{w^2h} & \\ 
- \frac{1}{h^2} + \frac{2}{wh}\arcsin{\frac{w}{\sqrt{d}}}+\frac{2}{w^2h} \sqrt{d-w^2}, & \\
\hspace{0.35\textwidth} \text{if}\quad w^2 < d \leq h^2 &\\
& \\
- \frac{1}{h^2} + \frac{2}{wh}\arcsin{\frac{w}{\sqrt{d}}}+\frac{2}{w^2h} \sqrt{d-w^2} & \\
- \frac{1}{w^2} + \frac{2}{wh}\arcsin{\frac{h}{\sqrt{d}}}+\frac{2}{wh^2} \sqrt{d-h^2} & \\
-\frac{\pi}{wh} - \frac{d}{w^2h^2},& \\
\hspace{0.35\textwidth} \text{if}\quad h^2 < d \leq w^2+h^2 & 
\end{array}\right.
\end{eqnarray*}
with $d := 4\theta^2x$ and without restriction $w\leq h$. $c$ is then approximated by
$$\mathbb{E}\left(\exp\left(-Z_t\right)\conditional~{} T_0>t,x_0,y_0 \right)=\int_{-\infty}^{0} \exp(x) \int_0^{t}  p_{Z_t}(x|T_0>t,d_0) \mathrm{d}u \mathrm{d}x$$
and density of $Z_t$ is approximated by
\begin{equation}
p_{Z_t}(x|T_0>t,d_0) \approx \delta(z_0-x) \exp(\alpha t) + p_{Z_{t}}^*(x|T_0>t,d_0) \left(1-\exp(\alpha t)\right)
\end{equation}
where $\alpha$ is the probability that an event with uniformly chosen location on $\mathcal{A}$ neither affects lineage $X$ located at $x_0$ nor $Y$ at $y_0$. In other words, $\exp(-\alpha t)$ is the probability that at time $t$ the lineages $X,Y$ are both located at their initial positions $x_0,y_0$. It can be calculated using the formulae in Section \ref{sec:dynamics}.
If $X_0$ and $Y_0$ are themselves uniformly sampled from $\mathcal{A}$, the natural approximation for $\Pr(T_0\leq t)$ is $1-\exp(-\lambda u_0^2 c t)$. If $x_0$ and $y_0$ are provided, we propose
\begin{equation}
\Pr(T_0\leq t|x_0,y_0) \approx 1-\exp\left(-\lambda u_0^2t\left(\exp(-\alpha t)\frac{\Delta z_0}{2}+\left(1 - \exp(-\alpha t)\right)c\right)\right)
\end{equation}
with $\alpha$ defined as above.\\
Generally, the results we obtain suggest that the approximation proposed here works best if the rectangle is not too large in relation to $\theta^2$. Otherwise, the fact that the quasi-stationary distribution is not exactly uniform seems to negatively affect the accuracy of determining $c$. The quasi-stationary distribution can also be found as the solution to a functional equation, but seems difficult to approach numerically.
\begin{figure}
\includegraphics[scale=0.75]{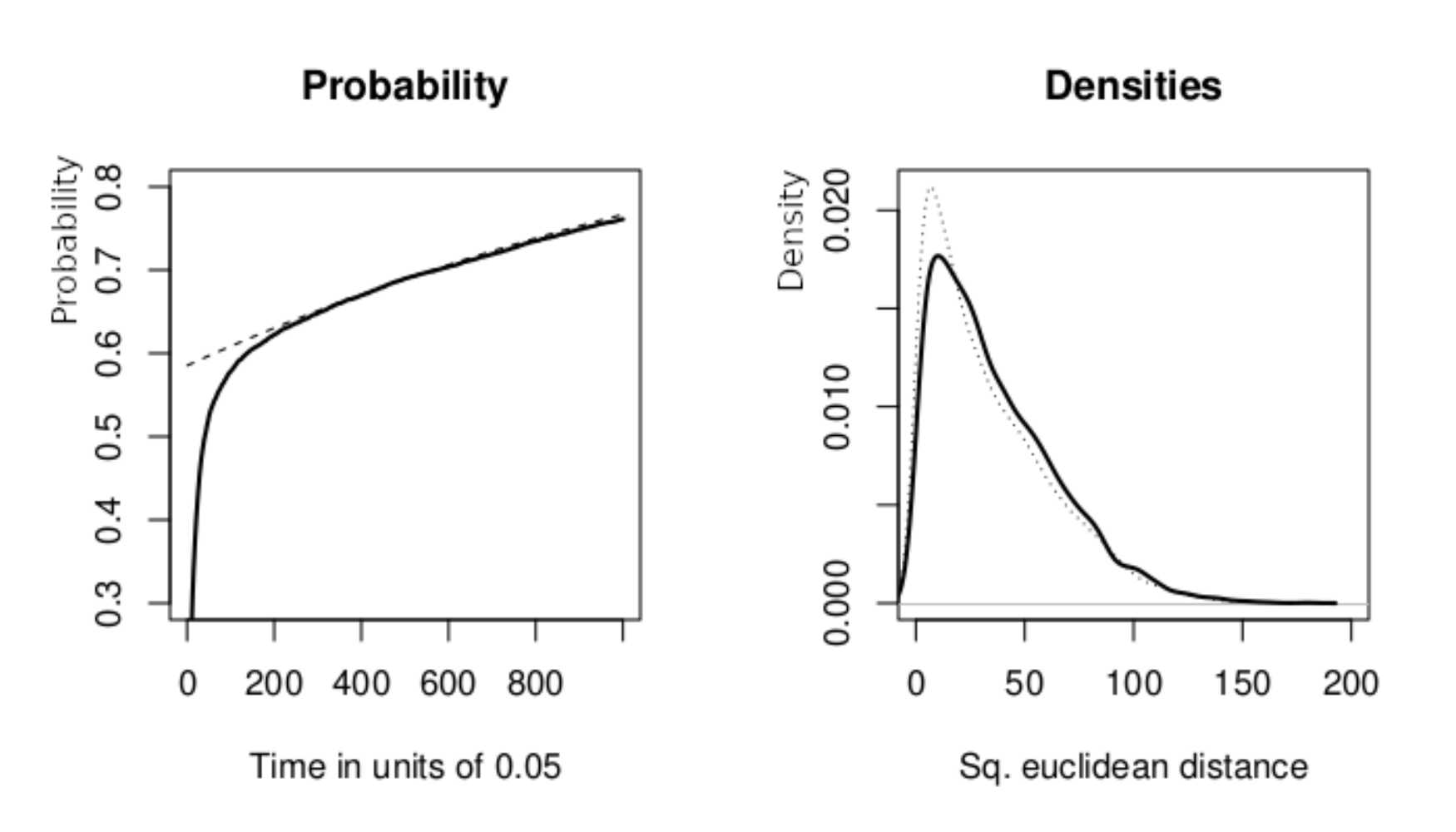}
\caption{\textit{Left}: Probability of coalescence times on a $10\times 10$ rectangle, $\theta^2=1/4$, $\lambda=1$, for fixed initial distances $0$. The value of $c$ is around $0.012$. Shown in dots is the function $1-\exp(-c(t+t_0))$ (an exponential distribution with parameter $c$, shifted to the left by $t_0\approx 0.8$). The bigger $t$ gets, the more similar the two curves become. \textit{Right}: The quasi-stationary distribution of $Z_t$ on the rectangle (black line) resembles the distribution of $Z_t$ when both positions are uniform (dots), with a slight preference for higher values.}
\label{fig:densitiesrectangle}
\end{figure}

\section{The dynamics of $Z_t$}\label{sec:plane}
The distribution of coalescence times on a rectangle could be approximated by relatively simple symbolical methods; to achieve the same for $\mathbb{R}^2$, analyzing the process of $Z_t$, given the initial distance $d_0$, turns out to be instructive. We assume $u_0=1$ from here on; the results may be reproduced in the same way for other values of this parameter. Consider the following function:
\begin{equation}\label{eq:mdef}
\mathcal{M}_{Z_t,d_0}(s):=\mathbb{E}\left(\exp\left(-sZ_t\right)\mathbb{1}_{T_0>t}\conditional~{} d_0\right)
\end{equation}
Usually, we will denote this function by $\mathcal{M}_{Z_t}(s)$ for short, unless we want to consider multiple values of $d_0$. One can think of this function as the moment-generating function of the random value $Z_t$, multiplied by an additional indicator function that returns $0$ if the pair of lineages has coalesced at time $t$ and $1$ if it hasn't. This function conveys a lot of information on the \slfv~in general and on coalescence times in particular. For example, we have
\begin{equation}\label{eq:fundamental}
\Pr\left(T_0\leq t\conditional~{} d_0\right)=1-\mathcal{M}_{Z_t}(0)
\end{equation}
Furthermore, it holds that
\begin{equation}\label{eq:moments1}
\left(\frac{\partial}{\partial s}\right)^k\mathcal{M}_{Z_t}(s)=\mathbb{E}\left(\left(-Z_t\right)^k\cdot\exp\left(-sZ_t\right)\mathbb{1}_{T_0>t}\conditional~{} d_0\right)
\end{equation}
for all $k\in\mathbb{N}$, and therefore
\begin{equation}\label{eq:moments2}
\left(\frac{\partial}{\partial s}\right)^k\mathcal{M}_{Z_t}(0)=\mathbb{E}\left(\left(-Z_t\right)^k\mathbb{1}_{T_0>t}\conditional~{} d_0\right)
\end{equation}
Using \eq{eq:fundamental}, we may even write
\begin{equation}
\left(\left(\frac{\partial}{\partial s}\right)^k\mathcal{M}_{Z_t}(0)\right)\cdot\left(\mathcal{M}_{Z_t}(0)\right)^{-1}=\mathbb{E}\left(\left(-Z_t\right)^k\conditional~{} T_0>t,d_0\right)
\end{equation}
In particular, we can compute the expectation appearing in Equation~\ref{eq:coalrate-ode}: 
\begin{align}\label{eq:mgfZ}
&\ghosteq\mathbb{E}\left(\frac{1}{2}\Delta \lambda u_0 \exp(-Z_t)\conditional~{} T_0>t,d_0 \right)\\
\notag&=\frac{\mathbb{E}\left(\frac{1}{2}\Delta \lambda u_0 \exp(-Z_t)\mathbb{1}_{T_0>t}\conditional~{} d_0\right)}{\Pr\left(T_0>t\conditional~{} d_0\right)}\\
\notag&=-\frac{1}{2}\Delta \lambda u_0\frac{\mathcal{M}_{Z_t}(1)}{\mathcal{M}_{Z_t}(0)}
\end{align}
It is possible to obtain a closed form of the derivative of $\mathcal{M}_{Z_t}(s)$ with respect to $t$ for any $s$. After close inspection of $\frac{\partial}{\partial t}\mathcal{M}_{D_t}(s)$ and a number of algebraic manipulations (see Appendix), we arrive at
\begin{align}
\label{eq:coalincorporated}
&\ghosteq\frac{\partial}{\partial t}\mathcal{M}_{Z_t}(s)\\
\notag&=2\Delta\lambda\left(\frac{1}{s+1}\mathcal{M}_{Z_t}\left(\frac{s}{1+s}\right)-\mathcal{M}_{Z_t}(s)-\frac{2}{3s+4}\mathcal{M}_{Z_t}\left(1+\frac{s}{3s+4}\right)\right)\\
\notag&~~+\frac{\Delta\lambda}{2}\mathcal{M}_{Z_t}(1+s)\\
\end{align}

The above is a "partial differential-functional equation" and to our knowledge not analytically solvable. For certain values of $s$, we still gain some insight on the process; for instance, plugging in $0$ for $s$ yields
\begin{equation}\label{eq:delp}
\frac{\partial}{\partial t}\Pr(T_0> t| d_0)=\frac{\partial}{\partial t}\mathcal{M}_{Z_t}(0)=-\frac{1}{2}\Delta \lambda u_0\mathcal{M}_{Z_t}(1)
\end{equation}
Two more crucial features of $\mathcal{M}_{Z_t}(s)$ are:
\begin{lemma}\label{lemma:convergence}
\begin{itemize}
\item[	a)]$\lim_{t\rightarrow \infty}\left(\frac{\partial}{\partial s}\right)^k\mathcal{M}_{Z_t}(s)=0$ for all $s>0,k\geq 0$.
\item[b)]$\lim_{t\rightarrow \infty}\mathcal{M}_{Z_t}(0)=p^*>0$ 
\end{itemize}
\end{lemma}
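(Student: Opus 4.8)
\emph{Proof idea.} I would prove (a) first, as it needs no fine control of the lineages. Integrating \eq{eq:delp} over $[0,T]$ gives $\int_0^{T}\mathcal{M}_{Z_t}(1)\,\mathrm{d}t=\tfrac{2}{\Delta\lambda u_0}(1-\Pr(T_0>T\mid d_0))\le\tfrac{2}{\Delta\lambda u_0}$ for every $T$, so $t\mapsto\mathcal{M}_{Z_t}(1)$ is nonnegative and integrable on $[0,\infty)$; moreover, by \eq{eq:coalincorporated} at $s=1$, its $t$-derivative is a linear combination with bounded coefficients of the four quantities $\mathcal{M}_{Z_t}(\tfrac{1}{2})$, $\mathcal{M}_{Z_t}(1)$, $\mathcal{M}_{Z_t}(\tfrac{8}{7})$, $\mathcal{M}_{Z_t}(2)$, each of which lies in $[0,1]$, so $t\mapsto\mathcal{M}_{Z_t}(1)$ is Lipschitz. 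A nonnegative, integrable, Lipschitz function on $[0,\infty)$ tends to $0$, hence $\mathcal{M}_{Z_t}(1)\to 0$. The rest of (a) follows from this: for $s\ge 1$, $\mathrm{e}^{-sZ_t}\le\mathrm{e}^{-Z_t}$ gives $\mathcal{M}_{Z_t}(s)\le\mathcal{M}_{Z_t}(1)\to 0$; for $0<s<1$, writing $\mathrm{e}^{-sZ_t}\mathbb{1}_{T_0>t}=(\mathrm{e}^{-Z_t}\mathbb{1}_{T_0>t})^{s}$ and applying Jensen's inequality to the concave map $x\mapsto x^{s}$ gives $\mathcal{M}_{Z_t}(s)\le\mathcal{M}_{Z_t}(1)^{s}\to 0$; and for $k\ge 1$, \eq{eq:moments1} gives $|(\partial/\partial s)^{k}\mathcal{M}_{Z_t}(s)|=\mathbb{E}(Z_t^{k}\mathrm{e}^{-sZ_t}\mathbb{1}_{T_0>t}\mid d_0)\le C_{k,s}\,\mathcal{M}_{Z_t}(s/2)\to 0$, where $C_{k,s}:=\sup_{z\ge 0}z^{k}\mathrm{e}^{-sz/2}<\infty$. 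That $s>0$ is essential is clear, since $\mathcal{M}_{Z_t}(0)=\Pr(T_0>t\mid d_0)$ does not vanish in the limit --- which is part (b).

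For (b), existence of the limit is immediate: $\mathcal{M}_{Z_t}(0)=\Pr(T_0>t\mid d_0)$ is non-increasing and bounded below by $0$, hence converges to $p^{*}:=\Pr(T_0=\infty\mid d_0)\ge 0$; the whole content is the strict inequality $p^{*}>0$. I would argue through the Cox-process decomposition given above: by \eq{eq:coaldist-pathwise-avg} and monotone convergence, $p^{*}=\mathbb{E}(\exp(-m(\infty)))$, where $m(\infty)=\tfrac{1}{2}\Delta\lambda u_0\int_0^{\infty}\exp(-D'_u/4\theta^2)\,\mathrm{d}u$ and $(D'_u)_{u\ge 0}$ is the first-stage squared-distance process (with coalescences switched off). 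Since $\exp(-m(\infty))>0$ precisely on $\{m(\infty)<\infty\}$, it suffices to prove $\Pr(m(\infty)<\infty)>0$; I would in fact aim to show that, started from $D'_0=d_0$, with positive probability $D'_u$ eventually dominates a linear function of $u$, along which trajectory $\exp(-D'_u/4\theta^2)$ is integrable.

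\textbf{This escape statement for $D'$ is the main obstacle.} A single event displaces the lineage it hits by a vector that is the convolution of two Gaussians of variance $\theta^2\mathbf{I}$, so each jump of $D'$ has conditional mean increment $\mathbb{E}(\Delta D'\mid D'=d)\ge 4\theta^2$, with an additional push away from coincidence when $d$ is small, coming from the suppression of mass near the lineage midpoint in \eq{eq:nocoal-conditional} (the mean increment equals $5\theta^2$ at $d=0$). Since each lineage is hit at a rate bounded away from $0$, the drift $b(\cdot)$ of $D'$ is bounded away from $0$, so $M_t:=D'_t-D'_0-\int_0^{t}b(D'_u)\,\mathrm{d}u$ is a local martingale with state-dependent but controlled jump moments, and the plan is to combine this drift lower bound with a Lyapunov estimate near the origin to produce a positive-probability escaping trajectory. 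The delicate point is that $D'$ behaves like the squared modulus of a two-dimensional random walk, hence sits on the recurrence/transience boundary: a comparison with plain Gaussian increments does not suffice, and the argument has to genuinely exploit the non-Gaussian correction of \eq{eq:nocoal-conditional} near $d=0$.
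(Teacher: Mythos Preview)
Your argument for part (a) follows the same skeleton as the paper's --- establish $\mathcal{M}_{Z_t}(1)\to 0$ first, extend to $s\ge 1$ by monotonicity and to $0<s<1$ by Jensen, then treat the $s$-derivatives --- and is in fact more careful on the first step: the paper simply infers $\mathcal{M}_{Z_t}(1)\to 0$ from the convergence of $\mathcal{M}_{Z_t}(0)$ via \eq{eq:delp}, without ruling out oscillation of the derivative, whereas your integrability-plus-Lipschitz argument supplies that missing piece. Your bound $z^{k}e^{-sz}\le C_{k,s}e^{-sz/2}$ for the derivatives is also cleaner than the paper's route through $\Pr(Z_t\le\delta,\,T_0>t)\to 0$.

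For part (b) you take a genuinely different route from the paper, and you leave it incomplete at exactly the hard step. The paper does \emph{not} use the Cox-process representation \eq{eq:coaldist-pathwise-avg} or attempt any pathwise escape argument for $D'$. It stays on the analytic side: from \eq{eq:dtgdl} one has
\[
\frac{\partial}{\partial t}\,\mathbb{E}\!\left(Z_t\mathbb{1}_{T_0>t}\,\big|\,d_0\right)
=2\Delta\lambda\Bigl(\mathcal{M}_{Z_t}(0)-\tfrac{1}{8}\bigl(3\mathcal{M}_{Z_t}(1)+\partial_s\mathcal{M}_{Z_t}(s)\big|_{s=1}\bigr)\Bigr),
\]
which is nonnegative and bounded above by $2\Delta\lambda\,\mathcal{M}_{Z_t}(0)$. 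Setting $y(t):=t^{-1}\mathbb{E}(Z_t\mathbb{1}_{T_0>t}\mid d_0)$, the paper runs an ODE comparison to argue that $y(t)$ has a strictly positive limit, and then reads off from the upper bound $(ty)'\le 2\Delta\lambda\,\mathcal{M}_{Z_t}(0)$ that $\mathcal{M}_{Z_t}(0)$ cannot tend to $0$. This completely bypasses the recurrence/transience issue you identify for the first-stage process: no pathwise control of $D'$ is needed, only the moment equation, and the coalescence mechanism is already built into $\mathcal{M}_{Z_t}$. Your probabilistic programme would yield a more transparent picture of \emph{why} lineages can escape, but completing it is genuinely harder than the paper's ODE route, for precisely the borderline-dimension reason you flag; the paper's approach buys you the conclusion without ever confronting that obstacle.
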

We defer the proof of this lemma to the appendix.\\
Lemma~\ref{lemma:convergence}\textit{b)} states that lineage pairs are not required to coalesce on $\mathbb{R}^2$. In some cases $Z_t$ never ceases to grow, and consequently the intensity of the coalescent process tends to $0$ so quickly that coalescence never occurs. This result also suggests that the \slfv~does not "come down from infinity", which is to say that if a sample of infinite size is taken from the population, then, looking at the genealogical process of this sample, one will always encounter an infinite amount of lineages ("dust") that have not coalesced up to any time $t$ back in the past. This problem is also treated in \cite{veber:slfv}, where it is proven explicitly that the \slfv~under its disc-based definition (see Definition~\ref{def:slfv}) does not come down from infinity. Despite their differences, it seems only natural to expect a similar statement to hold for the gaussian version.

\section{Approximation of the probability distribution}\label{sec:approx}
In order to approximate the numerical values of $\mathcal{M}_{Z_t}(0)=\Pr(T_0>t)$, we propose an approach that relies on the Taylor expansion of the function of interest, and one exploiting its representation as an ODE. Combining the two, one obtains a good approximation of $\mathcal{M}_{Z_t}(0)$. Further approaches (such as numerically solving \eq{eq:coalincorporated} by a Runge-Kutta scheme) can be envisioned, but either seem less accurate or computationally unfeasible.

\subsection{Calculation of the Taylor expansion}
We consider the Taylor series expansion of $\mathcal{M}_{Z_t}(0)$ at $t_0=0$:
\begin{equation}
\mathcal{M}_{Z_t}(0)=\sum_{j\in\mathbb{N}}\frac{t^j}{j!} \left(\frac{\partial}{\partial t}\right)^j\mathcal{M}_{Z_t}(0)\conditional~{}_{t=0}=\sum_{j\in\mathbb{N}}g_{j}t^j
\end{equation}
with $g_{j}:=\frac{1}{j!}\left(\left(\frac{\partial}{\partial t}\right)^j\mathcal{M}_{Z_t}(0)\right)\conditional~{}_{t=0}$.
We obtain these coefficients by considering the differential equation \eq{eq:coalincorporated} that is solved by $\mathcal{M}_{Z_t}(s)$. Multiple derivation with respect to $t$ yields
\begin{align}\label{eq:mgfcoalderived}
&\ghosteq\left(\frac{\partial}{\partial t}\right)^j\mathcal{M}_{Z_t}(s)\\
\notag&=2\Delta\lambda\left(\frac{1}{s+1}\left(\frac{\partial}{\partial t}\right)^{j-1}\mathcal{M}_{Z_t}\left(\frac{s}{1+s}\right)-\left(\frac{\partial}{\partial t}\right)^{j-1}\mathcal{M}_{Z_t}(s)\right)\\
\notag&~~+2\Delta\lambda\left(-\frac{2}{3s+4}\left(\frac{\partial}{\partial t}\right)^{j-1}\mathcal{M}_{Z_t}\left(1+\frac{s}{3s+4}\right)+\frac{1}{4}\left(\frac{\partial}{\partial t}\right)^{j-1}\mathcal{M}_{Z_t}(1+s)\right)
\end{align}
Setting $s=0$ in \eq{eq:mgfcoalderived}, we obtain, $\forall j > 0$
\begin{align}\label{eq:coeffrecursion}
&\ghosteq g_{j}\cdot j! = \left(\frac{\partial}{\partial t}\right)^j\mathcal{M}_{Z_t}(0)\conditional~{}_{t=0}\\
\notag&=2\Delta\lambda\left(\left(\frac{\partial}{\partial t}\right)^{j-1}\mathcal{M}_{Z_t}\left(0\right)\conditional~{}_{t=0}-\left(\frac{\partial}{\partial t}\right)^{j-1}\mathcal{M}_{Z_t}(0)\conditional~{}_{t=0}\right)\\
\notag&+\Delta\lambda\left(-\left(\frac{\partial}{\partial t}\right)^{j-1}\mathcal{M}_{Z_t}\left(1\right)\conditional~{}_{t=0}+\frac{1}{2}\left(\frac{\partial}{\partial t}\right)^{j-1}\mathcal{M}_{Z_t}(1)\conditional~{}_{t=0}\right)\\
=\notag&-\frac{\Delta\lambda}{2}\left(\left(\frac{\partial}{\partial t}\right)^{j-1}\mathcal{M}_{Z_t}\left(1\right)\right)\conditional~{}_{t=0}
\end{align}
By virtue of \eq{eq:coeffrecursion}, any term of the form $\left(\left(\frac{\partial}{\partial t}\right)^{k}\mathcal{M}_{Z_t}(\sigma)\right)\conditional~{}_{t=0}$, $\sigma>0$, may be expressed by terms of the form $\left(\left(\frac{\partial}{\partial t}\right)^{k-1}\mathcal{M}_{Z_t}(\tau)\right)\conditional~{}_{t=0}$, $\tau>0$. The repeated application of this equation results in an expression of the form
\begin{equation}
g_{j}\cdot j! = \sum_{i=1}^{4^j}\beta_i \left(\left(\frac{\partial}{\partial t}\right)^{0}\mathcal{M}_{Z_t}(\sigma_i)\right)\conditional~{}_{t=0}
\end{equation}
where $\left(\left(\frac{\partial}{\partial t}\right)^{0}\mathcal{M}_{Z_t}(s)\right)\conditional~{}_{t=0}=\mathcal{M}_{Z_0}(s)=\exp\left(-\frac{d_0s}{4\theta^2}\right)$ and $\beta_i\in\mathbb{R},\sigma_i>0$; so ultimately,
\begin{equation}
g^*_{0k}\cdot k! = \sum_{i=1}^{4^k}\beta_i \exp\left(-\frac{d_0\sigma_i}{4\theta^2}\right)
\end{equation}
Let $\gamma^{(J)}(t):=\sum_{j=0}^{J}g_{j}t^j$ denote the $J$-th order Taylor polynomial of $\mathcal{M}_{Z_t}(s)$. These polynomials approximate $\mathcal{M}_{Z_t}(s)$ very well for small values of $t$ (Figure~\ref{fig:taylor}). However, calculating successively higher orders quickly becomes computationally intense. We will therefore consider another strategy of approximating $\mathcal{M}_{Z_t}(s)$, sacrificing accuracy initially in exchange for being able to correctly display the long-term behavior of $\mathcal{M}_{Z_t}(s)$. An additional ingredient we will need is the asymptotic behavior of the first two moments of $Z_t$.

\subsection{Asymptotic of first and second moment}\label{sec:asymp}
Recalling \eq{eq:moments1} and \eq{eq:moments2}, we have
\begin{align}
\frac{\partial}{\partial s}\mathcal{M}_{Z_t}(s)&=\mathbb{E}\left(-Z_t\cdot\exp\left(-sZ_t\right)\mathbb{1}_{T_0>t}\conditional~{} d_0\right)\\
\left(\frac{\partial}{\partial s}\right)^{2}\mathcal{M}_{Z_t}(s)&=\mathbb{E}\left(Z_t^2\cdot\exp\left(-sZ_t\right)\mathbb{1}_{T_0>t}\conditional~{} d_0\right)
\end{align}
and setting $s=0$,
\begin{align}
\frac{\partial}{\partial s}\mathcal{M}_{Z_t}(s)\conditional~{}_{s=0}&=\mathbb{E}\left(-Z_t\mathbb{1}_{T_0>t}\conditional~{} d_0\right)\\
\left(\frac{\partial}{\partial s}\right)^{2}\mathcal{M}_{Z_t}(s)\conditional~{}_{s=0}&=\mathbb{E}\left(Z_t^2\mathbb{1}_{T_0>t}\conditional~{} d_0\right)
\end{align}
Performing the same derivations on the right-hand side of  \eq{eq:coalincorporated} and setting $s=0$ leads to differential equations for the first and second moments of  $Z_t$:
\begin{align}
\label{eq:dtgdl}\frac{\partial}{\partial t}\mathbb{E}\left(Z_t\mathbb{1}_{T_0>t}\conditional~{} d_0\right)&=2\Delta\lambda\left(\mathcal{M}_{Z_t}(0)-\frac{1}{8}\left(3\mathcal{M}_{Z_t}(1)+\frac{\partial}{\partial s}\mathcal{M}_{Z_t}(s)\conditional~{}_{s=1}\right)\right)\\
\notag \frac{\partial}{\partial t}\mathbb{E}\left(Z_t^2\mathbb{1}_{T_0>t}\conditional~{} d_0\right)&=4\Delta\lambda\left(\mathcal{M}_{Z_t}(0)+2\mathbb{E}\left(Z_t\mathbb{1}_{T_0>t}\conditional~{} d_0\right)\right)-\frac{9}{8}\Delta\lambda\mathcal{M}_{Z_t}(1)\\
&+\frac{1}{8}\Delta\lambda\left(6\frac{\partial}{\partial s}\mathcal{M}_{Z_t}(s)\conditional~{}_{s=1}+\frac{7}{2}\left(\frac{\partial}{\partial s}\right)^{2}\mathcal{M}_{Z_t}(s)\conditional~{}_{s=1}\right)
\end{align}
The right-hand side of \eq{eq:dtgdl} is nonnegative, because $\mathcal{M}_{Z_t}(1)\leq\mathcal{M}_{Z_t}(0)$ as well as $\frac{\partial}{\partial s}\mathcal{M}_{Z_t}(s)\conditional~{}_{s=1}\leq\mathcal{M}_{Z_t}(0)$, so unsurprisingly, $\mathbb{E}\left(Z_t\mathbb{1}_{T_0>t}\conditional~{} d_0\right)$ is monotonously increasing. Also, because $\mathcal{M}_{Z_t}$ and all its derivatives with respect to $s$ evaluated at $s>0$ vanish as $t\rightarrow \infty$ (Lemma~\ref{lemma:convergence}), we have the following approximation for large $t$:
\begin{align}
\frac{\partial}{\partial t}\mathbb{E}\left(Z_t\mathbb{1}_{T_0>t}\conditional~{} d_0\right)&\approx2\Delta\lambda\mathcal{M}_{Z_t}(0)\\
\frac{\partial}{\partial t}\mathbb{E}\left(Z_t^2\mathbb{1}_{T_0>t}\conditional~{} d_0\right)&\approx4\Delta\lambda\left(\mathcal{M}_{Z_t}(0)+2\mathbb{E}\left(Z_t\mathbb{1}_{T_0>t}\conditional~{} d_0\right)\right)
\end{align}
Furthermore, Lemma~\ref{lemma:convergence}\textit{b)} states that $\mathcal{M}_{Z_t}(0)$ can be treated like a nonzero constant for large $t$. This allows us to solve the system exactly (substituting equalities for both "$\approx$"): 
\begin{align}
\mathbb{E}\left(Z_t\mathbb{1}_{T_0>t}\conditional~{} d_0\right)&\approx2\Delta\lambda\mathcal{M}_{Z_t}(0)\cdot t+c_1\\
\mathbb{E}\left(Z_t^2\mathbb{1}_{T_0>t}\conditional~{} d_0\right)&\approx 8\Delta ^2\lambda ^2 \mathcal{M}_{Z_t}(0)\cdot t^2+4\Delta\lambda(\mathcal{M}_{Z_t}(0)+2c_1)\cdot t+c_2
\end{align}
with initial values $c_1,c_2>0$ (for which we could use $c_1=\mathbb{E}\left(Z_0\mathbb{1}_{T_0>0}\conditional~{} d_0\right)=d_0/4\theta^2$ and $c_2=\mathbb{E}\left(Z_0^2\mathbb{1}_{T_0>0}\conditional~{} d_0\right)=\left(\frac{d_0}{4\theta^2}\right)^2$).

Let $\mu_1(t):=\mathbb{E}\left(Z_t\conditional~~ T_0>t,d_0\right),\mu_2(t):=\mathbb{E}\left(Z_t^2\conditional~~ T_0>t,d_0 \right)$ and $\sigma^2(t)$ denote the first and second moment under the condition of no coalescence up to time $t$, and the conditional variance respectively. We obtain these moments by dividing the above equations by $\Pr(T_0>t\conditional~{} d_0)=\mathcal{M}_{D_t}(0)$. From the above approximation, we get:

\begin{align}
\label{eq:mut}\mu_1(t)&\approx2\Delta\lambda\cdot t+\frac{c_1}{\mathcal{M}_{Z_t}(0)}\\
\label{eq:vart}\mu_2(t)&\approx 8\Delta ^2\lambda ^2 \cdot t^2+4\Delta\lambda\left(1+2\frac{c_1}{\mathcal{M}_{Z_t}(0)}\right)\cdot t+\frac{c_2}{\mathcal{M}_{Z_t}(0)}
\end{align}
$\sigma^2_t$ is obtained by applying $\mathbb{V}(X)=\mathbb{E}(X^2)-\mathbb{E}(X)^2$. These approximations are not very precise and should only be taken to reflect the asymptotic behaviour. Simulations suggest that their accuracy increases considerably if the initial distance $d_0$ is taken to be large (see Figure~\ref{fig:avgvar}). In any case, if $t$ is large, $\mathcal{M}_{Z_t}(0)$ is close to $p^*$ (Lemma~\ref{lemma:convergence}\textit{b}), so $c_1/\mathcal{M}_{Z_t}(0)$ and $c_2/\mathcal{M}_{Z_t}(0)$ are almost constant, which shows that $\mu_1(t)$ is asymptotically linear (and $\mu_2(t)$ quadratic).

\begin{figure}

\begin{subfigure}[(G1)]{\textwidth}
\includegraphics[scale=.75]{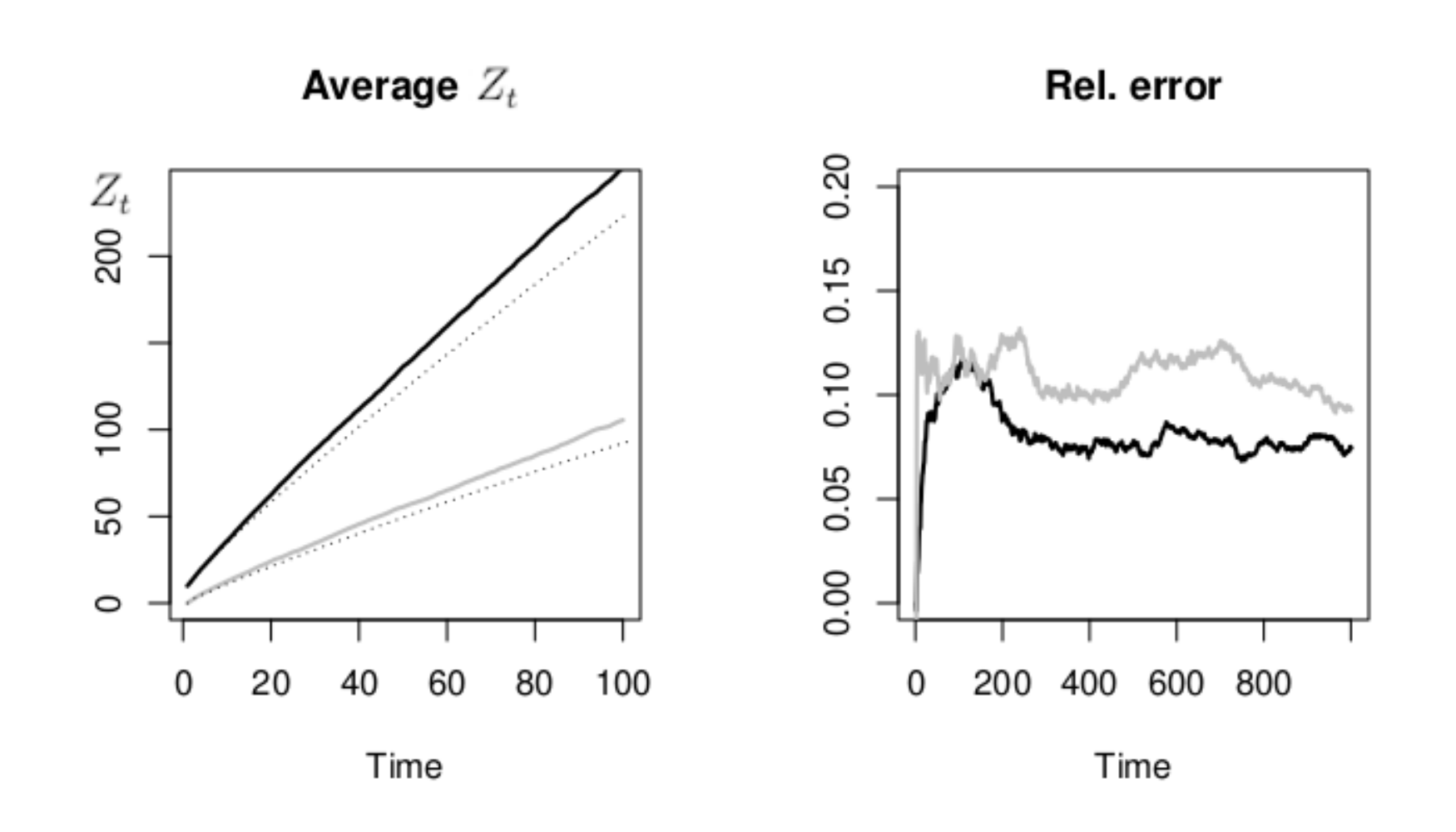}
\end{subfigure}
\begin{subfigure}[(G2)]{\textwidth}
\includegraphics[scale=.75]{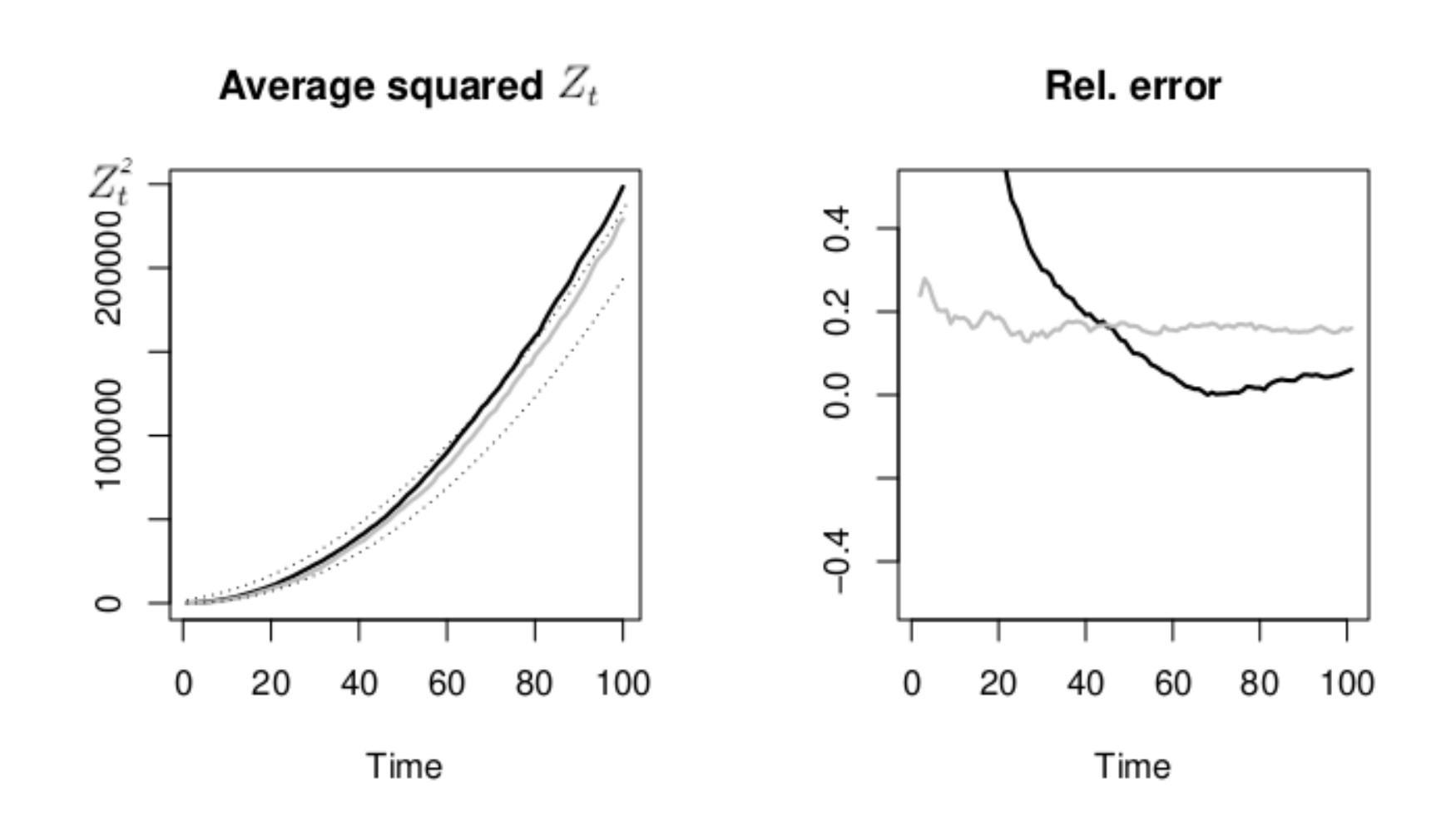}
\end{subfigure}
\caption{\textit{Top left:} Average of $Z_t$ (i.e., $\mu_1(t)$) with respect to time and $d_0=0$ (grey) and $d_0=10$ (black). Shown in dotted lines is the respective approximation obtained by the results of section~\ref{sec:asymp}. \textit{Top right:} One observes that the relative error is smaller for $d_0=10$ than for $d_0=0$. 
\textit{Bottom:} Here, we compare the average squared $Z_t$ (i.e., $\mu_2(t)$) with its approximation. Again, $d_0=10$ (as $t$ increases) yields the smallest relative error. }
\label{fig:avgvar}
\end{figure}
\subsection{Using the characteristic function of the Gamma Distibution}\label{sec:gamma}
In the following, we will consider the random variable $\mathcal{Z}_t$, defined as $Z_t$ conditioned on $T_0>t$. Consequently, $\mathbb{E}\left(\mathcal{Z}_t\conditional~{} d_0\right)=\mu_1(t)$, $\mathbb{E}\left(\mathcal{Z}_t^2\conditional~{} d_0\right)=\mu_2(t)$ and $\mathbb{V}\left(\mathcal{Z}_t\conditional~{} d_0\right)=\sigma^2(t)$.\\
We assume that $\mathcal{Z}_t$ follows a Gamma distribution with shape and rate parameters $\alpha_t$ and $\beta_t$. This assumption is justified by the fact that the distribution of $\mathcal{Z}_t$ is a convolution of $\chi^2$-distributions, which are of the Gamma family, and can be supported by simulations (see Figure~\ref{fig:gamma}). Then, we have for the characteristic function of $\mathcal{Z}_t$, evaluated at $s=i$ ($i$ representing the imaginary unit):
\begin{eqnarray}
\chi_{\Gamma(\alpha_t,\beta_t)}(i)=\mathbb{E}\left(\exp(-\mathcal{Z}_t) \conditional~{} d_0\right) = \left(1+\frac{1}{\beta_t}\right)^{-\alpha_t}
\end{eqnarray}

The parameters $\alpha_t$ and $\beta_t$ can be computed from the first two moments of the distribution:
\begin{align*}
\beta_t&=\mu_1(t)\left(\sigma^2(t)\right)^{-1}\\
\alpha_t&=\mu_1(t)^2\left(\sigma^2(t)\right)^{-1}
\end{align*}
\begin{figure}
    \centering
    \includegraphics[scale=.625]{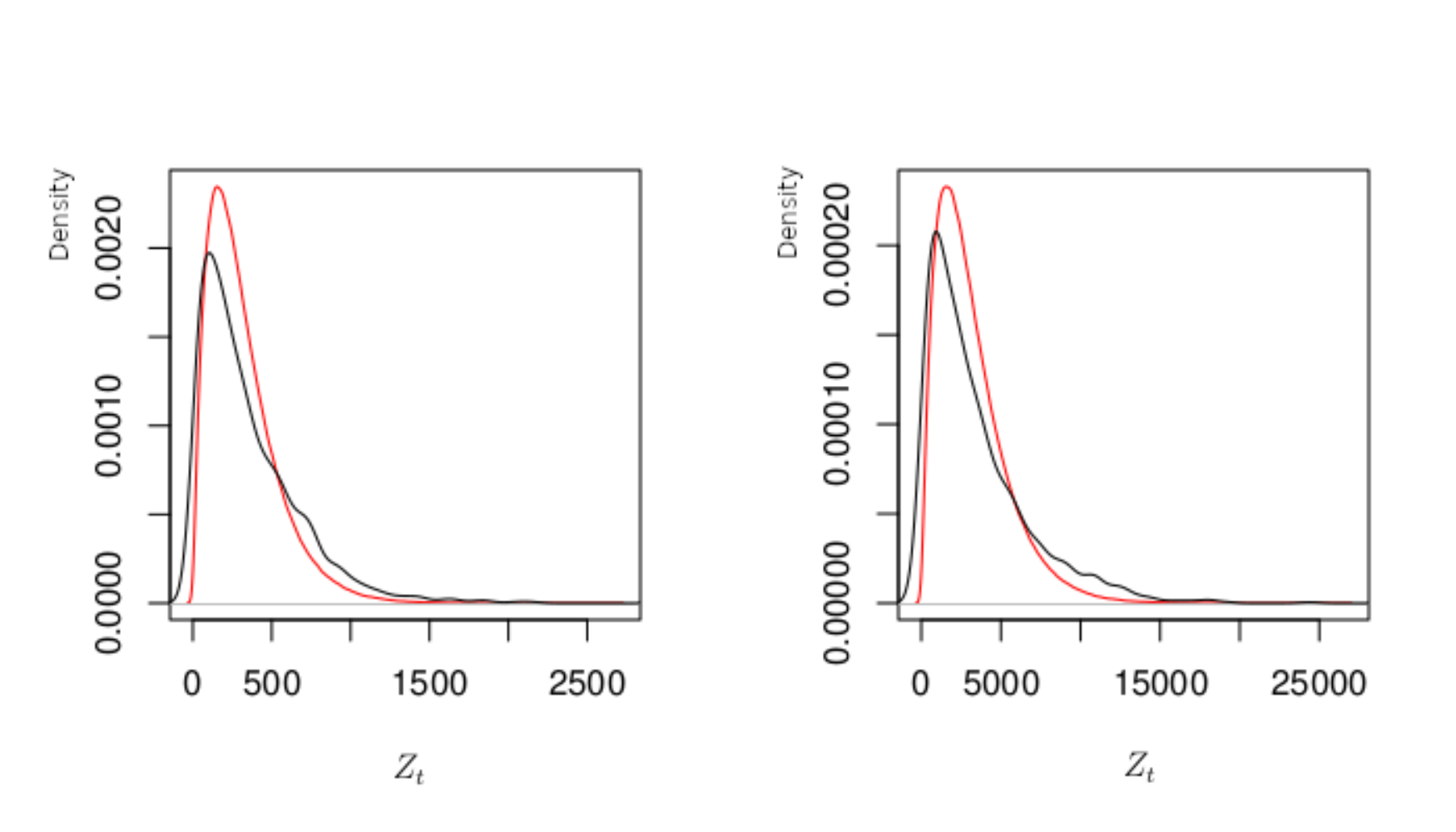}
    \caption{Gamma distributions (red) with parameters derived from the asymptotic moments are good approximates for the observed distributions (black) of $\mathcal{Z}_t$. \textit{Left}: $t=100$, \textit{right}: $t=1000$.}
    \label{fig:gamma}
\end{figure}
\eq{eq:coalrate-ode} then becomes
\begin{equation}\label{eq:coalrate-cfgamma}
\Pr(T_0\leq t|d_0)\approx 1-\exp\left(-\int_0^{t}\frac{1}{2}\Delta \lambda \left(1+\frac{\sigma^2(u)}{\mu_1(u)}\right)^{-\frac{\mu_1(u)^2}{\sigma^2(u)}} \mathrm{d}u\right)
\end{equation}
In this, we can use the approximations for $\mu_1(t)$ and $\sigma^2(t)$ (\eq{eq:mut}, \eq{eq:vart}) developed in section~\ref{sec:asymp}. Equipped with this, \eq{eq:coalrate-cfgamma} is an approximation scheme for $\Pr(T_0\leq t|d_0)$.\\
While this usually is not particularly close to the distribution of $T_0$ (see Figure~\ref{fig:smoothunsmooth}), there are several possibilities of improvement. If there are known or reasonably well approximated values of $\Pr(T_0\leq x|d_0)$ for some $x>0$, we have
\begin{equation}\label{eq:coalrate-cfgamma-init}
\Pr(T_0\leq t|d_0)\approx 1-\Pr(T_0> x|d_0)\exp\left(-\int_x^{t}\frac{1}{2}\Delta \lambda \left(1+\frac{\sigma^2(u)}{\mu_1(u)}\right)^{-\frac{\mu_1(u)^2}{\sigma^2(u)}} \mathrm{d}u\right)
\end{equation}
For example, $\mathcal{M}_{Z_t}(0)$ can be approximated up to $x>0$ by a Taylor polynomial, while for $t>x$ one utilizes \eq{eq:coalrate-cfgamma-init}. Below, we show the result of this using the asymptotic approximations (\eq{eq:mut}, \eq{eq:vart}) for the moments (this procedure is dubbed the "naive" approach). It is possible to attain more precision by calculating $\mu_1(x)$ and $\sigma^2(x)$ exactly (e.g., by another Taylor scheme), and continue $\mu_1(t)$ linearly and $\sigma^2(t)$ quadratically for $t>x$, using the results of section~\ref{sec:asymp}.\\
These approximation schemes typically results in a "knee" of the curve (i.e, a point where it visibly ceases to be smooth). One can mitigate this by calculating the (unique) values of $\tilde{\mu}_1(x)$ and $\tilde{\sigma}^2(x)$ that provide a smooth continuation in \eq{eq:coalrate-cfgamma-init} (they do not necessarily equal the true values $\mu_1(x), \sigma^2(x)$). For $t>x$, $\mu_1(t)$ can be extended linearly and $\sigma^2(t)$ quadratically. This "smooth" way of approximating $\mathcal{M}_{Z_t}(0)$ fits the distribution of $T_0$ rather well. Generally, a higher threshold $x$ results in a higher accuracy.
\subsection{Utilizing the similarity of the slope of $\mathcal{M}_{Z_t}(0)$ for differing $d_0$}
In practice, it may become necessary to approximate $\mathcal{M}_{Z_t,d_0}(0)$ for several different initial values of $d_0$ (As a reminder to the reader, $d_0$ is included in the subscript in the definition of $\mathcal{M}_{Z_t,d_0}(0)$, see~\eq{eq:mdef}, but was omitted throughout most of the previous sections for the sake of convenience). In order to do that, one can take advantage of the observation in Figure~\ref{fig:coalprob} that in the long run, the derivatives $\frac{\partial}{\partial t}\mathcal{M}_{Z_t,d_0}(0)$ for different initial conditions start to closely resemble each other; i.e., their slopes become similar as $t$ gets large. One possible explanation of this is that lineage pairs are expected to grow apart almost linearly, regardless of the value of $d_0$, unless they coalesce early on. To support this claim, we refer to our simulations.\\
Assuming we have a precomputed approximation $M_{0}(t)$ of $\mathcal{M}_{Z_t,0}(0)$, a fast way of approximating $\mathcal{M}_{Z_t,d}(0)$ for some $d>0$ is therefore the utilization of the Taylor expansion for small values of $t\leq x$, and continuing the curve for $t>x$ by the slope of $\mathcal{M}_{Z_t,0}(0)$ that is obtained according to the scheme we discussed (Section~\ref{sec:gamma}). Again, this can be done in such a way that the resulting curve is smooth, e.g., by "shifting" to the point $x'$ that guarantees a smooth transition. To be precise, let $\gamma_{d}^{(k)}(t)$ denote a Taylor approximation of $\mathcal{M}_{Z_t,d}(0)$. Then, for given $x>0$, we may choose $x'$ such that the function
\begin{equation}
M_{d}(t)=
\begin{cases}
\gamma_{d}^{(k)}(t) & t\leq x \\
M_{0}(t-x+x') & t>x \\
\end{cases}
\end{equation}
is smooth. Hence, $M_{d}(t)$ is an approximation of $\mathcal{M}_{Z_t,d}(0)$ whose long-term behaviour agrees with that of $M_{0}(t)$.
\subsection{Visualisations of the approximations}
We show some approximations of the coalescent probability for $\theta^2,\lambda=1,u_0=1$ and $Z_0=0$. The density itself is obtained from a set of 10000 simulation runs of $Z_t$ (see also Figures~\ref{fig:trajectories} and~\ref{fig:coalprob}).
\begin{figure}
\includegraphics{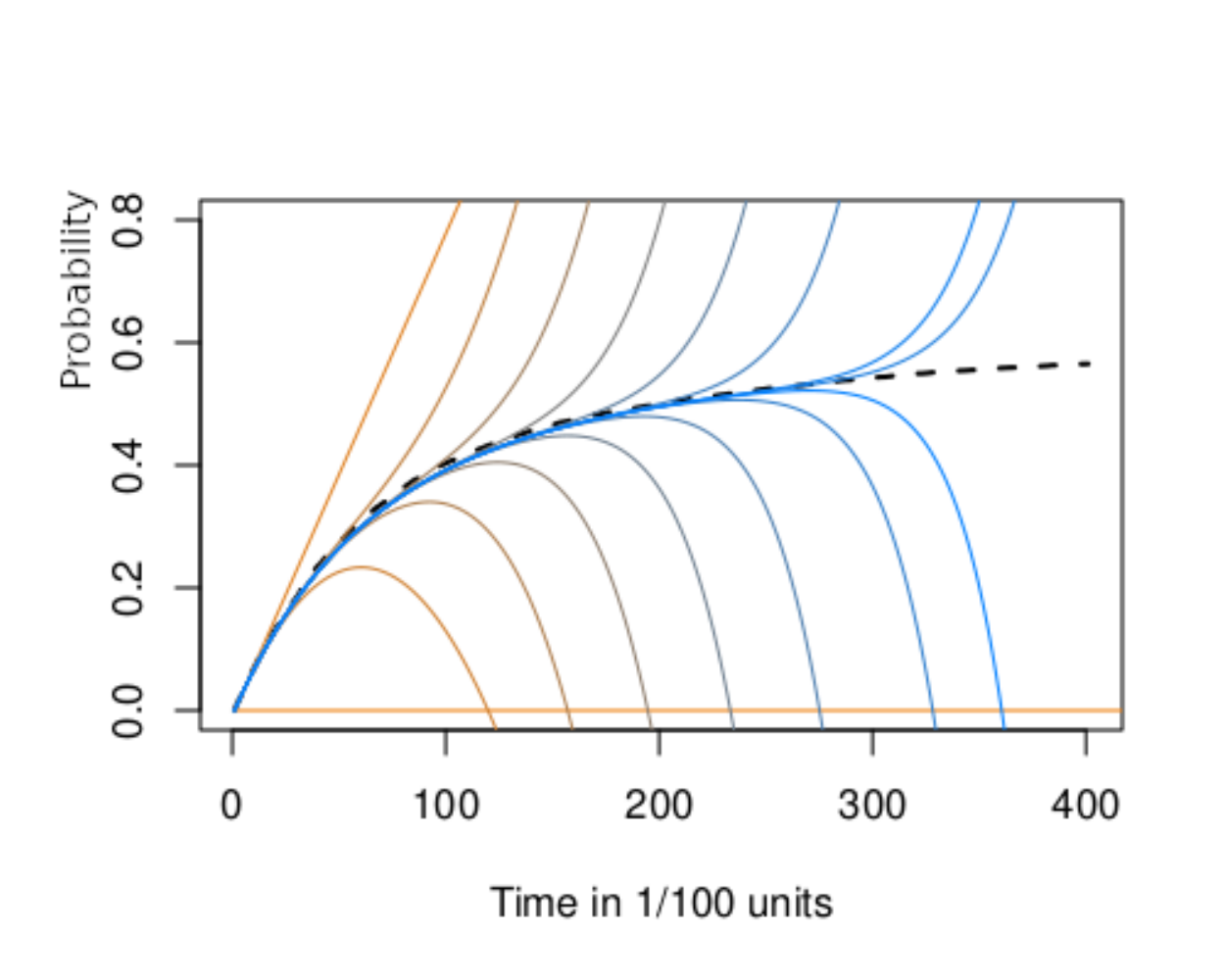}
\caption{Taylor approximations of the probability density of coalescences. Shown in color from yellow to blue are the approximations of increasing order up to 16.}
\label{fig:taylor}
\end{figure}
\begin{figure}
\begin{subfigure}[(G1)]{0.5\textwidth}
\includegraphics[scale=.5]{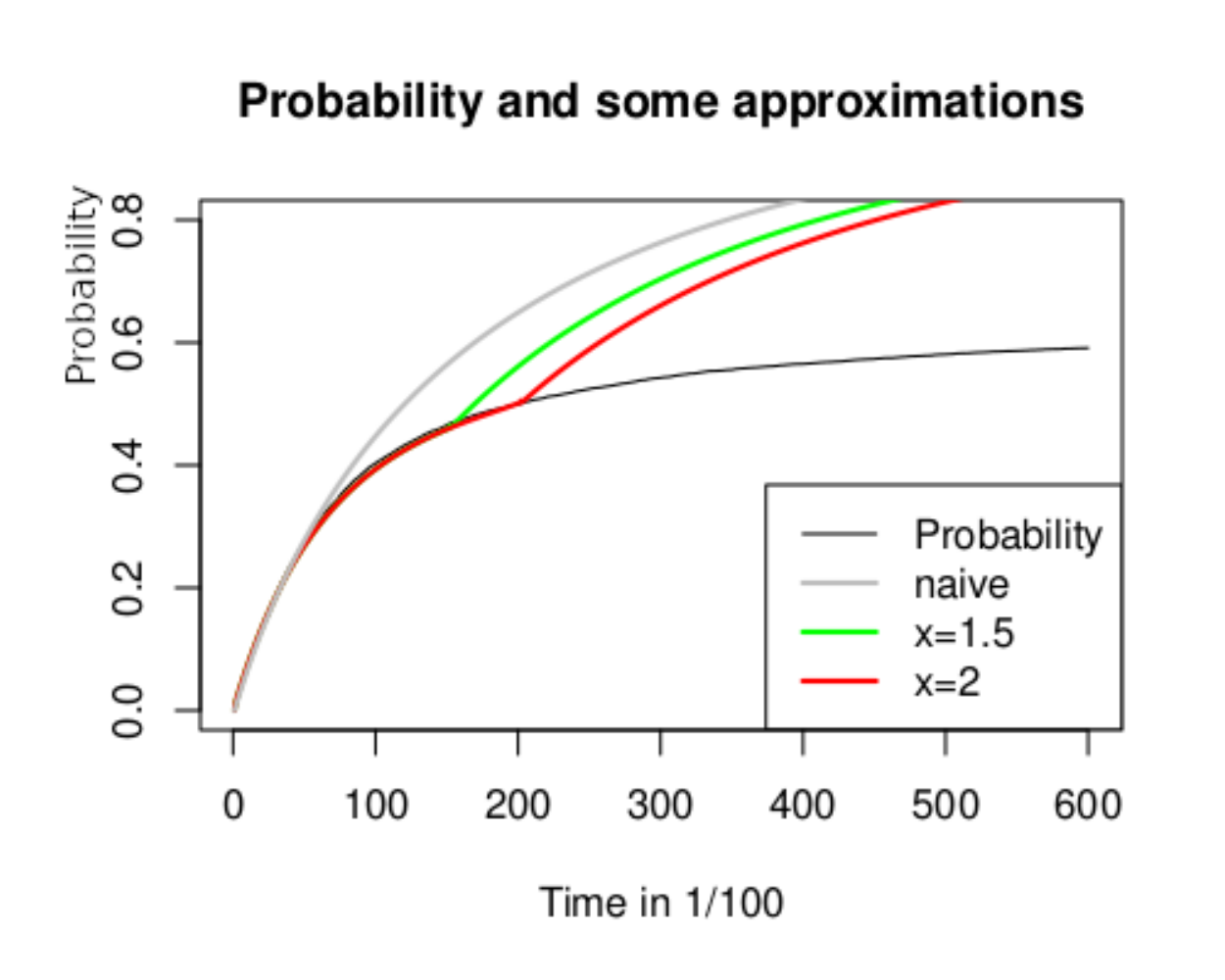}
\end{subfigure}
\begin{subfigure}[(G2)]{0.5\textwidth}
\includegraphics[scale=.5]{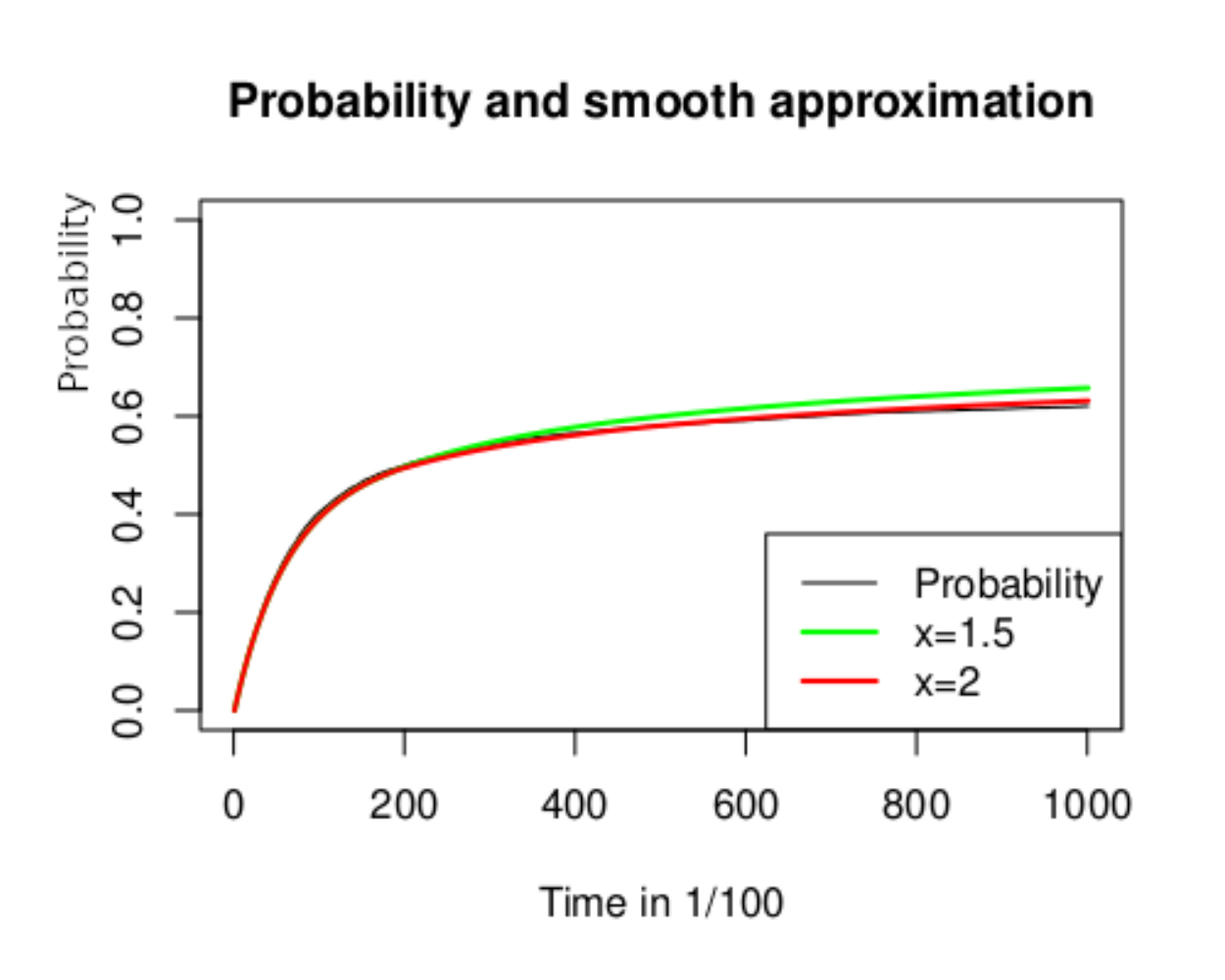}
\end{subfigure}
\caption{Comparison between the "naive" approximation using \eq{eq:coalrate-cfgamma} as well as continuing with \eq{eq:coalrate-cfgamma-init} after the threshold value $x$ (\textit{left}), and the "smooth" approach. Not only does the naive way introduce an unrealistic point of non-differentiability, but also deviates from the probability distribution much faster.}
\label{fig:smoothunsmooth}
\end{figure}
\begin{figure}

\includegraphics{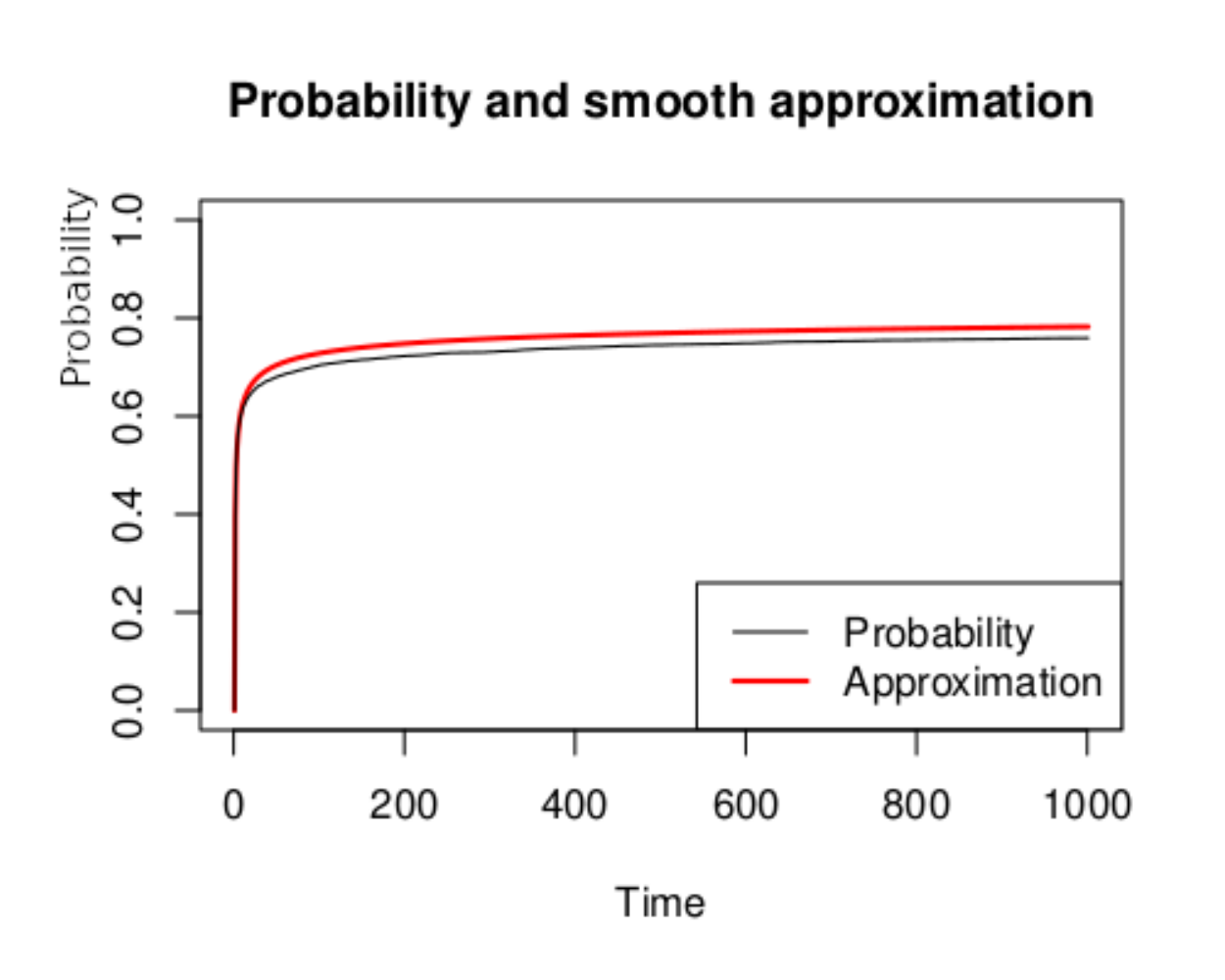}

\caption{Long-term behaviour of the probability distribution and the smooth approximation for $x=2$.}
\label{fig:approx}
\end{figure}

\section{Discussion}
In this study, we  describe ways of approximating the distribution of the time to coalescence $\Pr(T_0\leq t|d_0)$ for pairs of lineages under the \slfv~on $\mathbb{R}^2$ as well as on a finite rectangle. The major difference between the two cases is the fact that $T_0$ is almost surely finite on a rectangle, whereas our analysis showed that on the plane, there is a chance $p^*$, dependent on the parameter choice of the model, that lineage pairs escape the coalescent mechanism entirely (as has been suspected and investigated before in different contexts, e.g. \cite{veber:slfv}). It is noteworthy that this does not necessarily pose a problem in applications, since $T_0$ can be conditioned on coalescence taking place by simply dividing its distribution by $1-p^*$.\\
For the \slfv~on a rectangle $\mathcal{A}$, $\Pr(T_0\leq t|x_0,y_0)$ becomes proportional to an exponential function as $t$ gets large. It is reasonable to expect this to be true not just on a rectangle, but on any  compact habitat. Especially interesting cases for similar analyses appear to be spherical and toric habitats, since those are compact objects, but the reproduction mechanism can be defined such that border effects are avoided. Regarding the approximation of $\Pr(T_0\leq t|x_0,y_0)$, there are two cases to be distinguished: First, if $\mathcal{A}$ is of "moderate" size in relation to $\theta^2$, the determining parameters of the distribution can be well-approximated, as the equilibrium distribution of the location of the two lineages is nearly uniform and the corresponding density for the distribution of pairwise distance has a closed-form formula \cite{philip2007}. If $\mathcal{A}$ is large, simulations show that the deviation of the equilibrium distribution, while small in absolute terms, negatively impacts the approximation scheme. However, one natural solution to this problem might be to simply rely on the results for the process on $\mathbb{R}^2$ in such a case.\\
On $\mathbb{R}^2$, we related $\Pr(T_0\leq t|d_0)$ to the process of $Z_t$, which was defined as the squared euclidean distance $D_t$ between a pair of lineages, divided by $4\theta^2$. Our analysis revealed several properties of $Z_t$ as well, for instance the asymptotically linear growth of its expectation. The function $\mathcal{M}_{Z_t}(s)$, defined similarly to a moment generating function, with the addition of an indicator variable, is linked to $\Pr(T_0\leq t|d_0)$ via a series of algebraic and differential equations. Perhaps similar techniques can be used to analyze survival times in other stochastic processes with spontaneous entries into absorbing states (for the \slfv, these are the coalescences). The approximation scheme we suggest relies on evaluating $\Pr(T_0\leq t|d_0)=1-\mathcal{M}_{Z_t}(0)$ for small $t$ using a Taylor polynomial, and continuing the approximation for larger $t$ by assuming that $\mathcal{Z}_t$ is gamma-distributed.\\
We have ignored  $u_0$, the "mortality" parameter, by systematically considering its value as equal to 1 in our derivations. Intuitively, the effect of changing $u_0$ should be similar to changing the value of $\lambda$. Calculations presented in Section \ref{sec:plane} can be repeated by making the parameter $u_0$ explicit, resulting in an additional parameter in the approximation. Yet, more work would be required in order to verify that coalescence probabilities that involve $u_0$ could eventually be approximated by using the very same approaches as the ones presented here.\\
The presented results describe the ancestral process for a sample of size $n=2$ under the \slfv. For bio-statistical purposes, it would certainly be helpful to extend these to bigger sample sizes. Most of the formulae in Section \ref{sec:dynamics} can, at first glance, be modified to incorporate more than two lineages, so it seems possible to approach waiting times for multiple mergers at least approximately, which would allow for more sound statistical analyses in phylogeography. However,  knowledge about the process for $n=2$ enables statistical assessment at least for pairs of samples, with all pairs in a sample considered as independent from one another. We maintain that the presented methodology may be used to obtain estimates of the parameters $\theta^2, \lambda$ from geo-referenced genomic data. At the very least, estimates of $\theta^2$ and $\lambda$ obtained in this way may serve as valuable indicators of the speed of evolution and dispersing potential of biological organisms.\\
\section*{Acknowledgements}
This work was funded by the Agence Nationale
pour la Recherche [\url{https://anr.fr/}] through the grant GENOSPACE, and the Walter-Benjamin Program (WI 5589/1-1) of the DFG [\url{https://dfg.de/}].

\section{Appendix}
\begin{proof}[Derivation of \eq{eq:coalincorporated}]
We have
\begin{align}\label{eq:coalincorporated2}
&\ghosteq\frac{\partial}{\partial t}\mathcal{M}_{Z_t}(s)\\
\notag&=\lim_{h\rightarrow 0}\frac{1}{h}\left(\mathbb{E}\left(\exp\left(-sZ_{t+h}\right)\mathbb{1}_{T_0>t+h}\conditional~{} d_0\right)-\mathbb{E}\left(\exp\left(-sZ_t\right)\mathbb{1}_{T_0>t}\conditional~{} d_0\right)\right)\\
\notag&=\mathbb{E}\left(\lim_{h\rightarrow 0}\frac{1}{h}\mathbb{E}\left(\exp\left(-sZ_{t+h}\right)\mathbb{1}_{T_0>t+h}-\exp\left(-sx\right) \conditional~{} Z_t=x, d_0\right)\mathbb{1}_{T_0>t}\conditional~{} d_0\right)
\end{align}
We may decompose the interior expectation by conditioning the number $N_h$ of events affecting the lineages that are encountered in the interval $[t,t+h]$. Any event that hits either $X$ or $Y$ or both contributes to the number $N_h$. We will show that in the limit of $h\rightarrow 0$, only the case $N_h=1$ is relevant.

\begin{align}\label{eq:coalincorporated3}
\notag &\ghosteq\mathbb{E}\left(\exp\left(-sZ_{t+h}\right)\mathbb{1}_{T_0>t+h}-\exp\left(-sx\right) \conditional~{} Z_t=x, d_0\right)\\
 &= \sum_{i\in\mathbb{N}}\Pr(N_h=i\conditional~{} Z_t=x, d_0)\mathbb{E}\left(e^{-sZ_{t+h}}\mathbb{1}_{T_0>t+h}-e^{-sx}\conditional~{} N_h=i,Z_t=x,d_0\right)
\end{align}
If $N_h=0$, we have $Z_{t+h}=x$ and the corresponding term vanishes. For $N_h=1$, we can express the probability by multiplying the density with which an event occurs at $t+u\in[t,t+h]$ (exponential with parameter given by \eq{eq:totalrate}) with the probability of no further event during the remainder of this interval 
$$2\Delta\lambda\left(1-\frac{e^{-Z_t}}{4}\right)e^{-2\Delta\lambda\left(1-\frac{e^{-Z_t}}{4}\right)u}\cdot e^{-2\Delta\lambda\left(1-\frac{e^{-Z_{t+u}^+}}{4}\right)(h-u)}$$
and integrating over $u\in[0,h]$. $Z_{t+u}^+$ denotes the distance of the lineages immediately after the event that occurs at time $t+u$; note that $X$ and $Y$ may coalesce due to this event, in which case $Z_{t+u}^+=0$. Then, it holds that
\begin{align}\label{eq:eventintensity}
\notag&\ghosteq\lim_{h\rightarrow 0}\frac{1}{h}\Pr(N_h=1\conditional~{} Z_t=x, d_0)\\
\notag&=\lim_{h\rightarrow 0}\frac{1}{h}\int_0^h2\Delta\lambda\left(1-\frac{e^{-Z_t}}{4}\right)e^{-2\Delta\lambda\left(1-\frac{e^{-Z_t}}{4}\right)u+2\Delta\lambda\left(1-\frac{e^{-Z_u^+}}{4}\right)(h-u)}du\\
&=2\Delta\lambda\left(1-\frac{e^{-Z_t}}{4}\right)
\end{align}
As for $N_h\geq 2$, it clearly holds that
\begin{equation}
\Pr(N_h=k\geq 2\conditional~{} Z_t=x,d_0)\leq \left(2\Delta\lambda h\right)^k/k!e^{-2\Delta\lambda}
\end{equation}
because $2\Delta\lambda$ is an upper bound to the total rate of events (\eq{eq:totalrate}). Because of that, we have 
$$\lim_{h\rightarrow 0}\frac{1}{h}\Pr(N_h=k\geq 2\conditional~{} Z_t=x)\leq \lim_{h\rightarrow 0}\frac{1}{h}\frac{(2\Delta\lambda)^k}{k!}\exp(-2\Delta\lambda) =0$$
Close inspection of the term corresponding to $N_h=1$ reveals
\begin{align}\label{eq:lineagejumps}
\notag &\ghosteq \lim_{h\rightarrow 0}\mathbb{E}\left(2\Delta\lambda\left(1-\frac{e^{-Z_t}}{4}\right)e^{-\frac{sZ_{t+h}}{4\theta^2}}\mathbb{1}_{T_0>t+h}-e^{-\frac{sx}{4\theta^2}}\conditional~{} N_h=1,Z_t=x,d_0\right)\\
\notag &=2\Delta\lambda\mathbb{E}\left(\left(1-\frac{e^{-x}}{4}\right)e^{-\frac{sZ_{t}^+}{4\theta^2}}\mathbb{1}_{T_0>t}-e^{-\frac{sx}{4\theta^2}}\conditional~{} Z_t=x,d_0,\textnormal{ Event at }t\right)\\
&=2\Delta\lambda\mathbb{E}\left(\int\left(1-\frac{e^{-x}}{2}\right) \left(e^{-\frac{sw}{4\theta^2}}-e^{-\frac{sx}{4\theta^2}}\right)p_{\{X_{t}^+\conditional~{}!Y_t\}}(w)\mathrm{d}w\conditional~{} Z_t=x,d_0\right)\\
\notag &-\Delta\lambda\frac{e^{-x/4\theta}}{2}e^{-\frac{sZ_t}{4\theta^2}}
\end{align}
where the expression~(\ref{eq:lineagejumps}) accounts for all cases in which only one lineage is affected by the event, and the one below for those in which the lineages coalesce. Let $w$ denote the value $Z_t^+$ immediately after the event. We can assume without restriction that lineage $X_t$ is hit by the event and $Y_t$ remains at its position. Then, the density of $w$ is given by~\eq{eq:nocoal-conditional}. We find
\begin{align}\label{eq:mgf-conditional}
&\ghosteq \int\left(1-\frac{e^{-x}}{2}\right) \left(e^{-\frac{sw}{4\theta^2}}-e^{-\frac{sx}{4\theta^2}}\right)p_{\{X_{t}^+\conditional~{}!Y_t\}}(w)\mathrm{d}w\\
&=\frac{1}{s+1}e^{-\frac{\frac{s}{1+s}x}{4\theta^2}}-e^{-\frac{sx}{4\theta^2}}-\frac{2}{3s+4}e^{-\frac{\frac{4(s+1)}{3s+4}x}{4\theta^2}}+\frac{1}{2}e^{-\frac{(s+1)x}{4\theta^2}}
\end{align}
The evaluation of the integral is extensive, but ultimately trivial. Assembling everything, we arrive at \eq{eq:coalincorporated}:
\begin{align}
\notag&\ghosteq\frac{\partial}{\partial t}\mathcal{M}_{Z_t}(s)\\
\notag&=\mathbb{E}\left(2\Delta\lambda\left(\frac{1}{s+1}e^{-\frac{s}{1+s}Z_t}-e^{-sZ_t}+\frac{2}{3s+4}e^{-\frac{4(s+1)}{3s+4}Z_t}\right)\mathbb{1}_{T_0>t}\conditional~{} d_0\right)\\
&+\mathbb{E}\left(\frac{\Delta\lambda}{2} e^{-(1+s)Z_t}\mathbb{1}_{T_0>t}\conditional~{} d_0\right)
\end{align}
from which the claimed identity follows by linearity of the expectation. 
\end{proof}
\begin{proof}[Proof of Lemma~\ref{lemma:convergence}]
\textit{a)} Since $\mathcal{M}_{Z_t}(0)\in[0,1]$ and monotonously falling, there exists a limit $c\in[0,1]$. Consequently, $\lim_{t\rightarrow\infty}\mathcal{M}_{Z_t}(1)=0$, because of the established relationship between the two (\eq{eq:delp}).\\
For $s>1$, $\mathcal{M}_{Z_t}(1)>\mathcal{M}_{Z_t}(s)$, so $\lim_{t\rightarrow\infty}\mathcal{M}_{Z_t}(s)=0$ holds here as well. Concerning $s<1$, we have
$$\mathcal{M}_{Z_t}(s)=\mathbb{E}\left(\exp\left(-sZ_t\right)\mathbb{1}_{T_0>t}\conditional~{} d_0\right)\leq\mathbb{E}\left(\exp\left(-Z_t\right)\mathbb{1}_{T_0>t}\conditional~{} d_0\right)^{s}=\left(\mathcal{M}_{Z_t}(1)\right)^s$$
by Markov's inequality. Therefore, $\mathcal{M}_{Z_t}(s)$ converges to $0$ uniformly on any interval $[\sigma,\infty),\sigma>0$.\\
Having established that $\mathcal{M}_{Z_t}(s)$ converges to $0$, we know that for any $\delta\in\mathbb{R}^+$, the probability $\Pr(Z_t\leq \delta,T_0>t\conditional~{} d_0)$ converges to zero as well (otherwise, we end up with a contradiction). Consequentially, since $\left(\frac{x}{4\theta}\right)^k\exp\left(-\frac{sx}{4\theta^2}\right)$ converges to $0$ for any $k>0$ as $x\rightarrow\infty$, all the derivatives, given by $\left(\frac{\partial}{\partial s}\right)^k\mathcal{M}_{Z_t}(s)=\mathbb{E}\left(\left(-Z_t\right)^k\exp\left(-sZ_t\right)\mathbb{1}_{T_0>t}\conditional~{} d_0\right)$ necessarily converge to $0$ as well.\\
\textit{b)} Consider the function $y(t):=\mathbb{E}\left(\frac{Z_t}{t}\mathbb{1}_{T_0>t}\conditional~{} d_0\right)$. It solves the differential equation
\begin{align}
\notag\frac{\partial}{\partial t}y(t)&=\left(2\Delta\lambda\left(\mathcal{M}_{Z_t}(0)-\frac{1}{8}\left(3\mathcal{M}_{Z_t}(1)+\frac{\partial}{\partial s}\mathcal{M}_{Z_t}(s)\conditional~{}_{s=1}\right)\right)\right)\cdot t^{-1}\\
&\ghosteq-\mathbb{E}\left(\frac{Z_t}{t}\mathbb{1}_{T_0>t}\conditional~{} d_0\right)\cdot t^{-1}\\
\notag&=\left(2\Delta\lambda\left(\mathcal{M}_{Z_t}(0)-\frac{1}{8}\left(3\mathcal{M}_{Z_t}(1)+\frac{\partial}{\partial s}\mathcal{M}_{Z_t}(s)\conditional~{}_{s=1}\right)\right)\right)\cdot t^{-1}\\
&\ghosteq-y(t)\cdot t^{-1}
\end{align}
Since the first term on the right-hand side is strictly positive (we have $\mathcal{M}_{Z_t}(1)\leq\mathcal{M}_{Z_t}(0)$, $\frac{\partial}{\partial s}\mathcal{M}_{Z_t}(s)\conditional~{}_{s=1}\leq\mathcal{M}_{Z_t}(0)$), the following must hold for any solution $y(t)$ to this differential equation:
\begin{equation}
\forall t>t_0:y(t)>z(t) \textnormal{ if } \frac{\partial}{\partial t}z(t)=z(t)\cdot t^{-1}
\end{equation}
provided some initial value $c=y(t_0)=z(t_0),t_0>0$. Moreover, if $\lim_{t\rightarrow\infty}z(t)=\delta\in\mathbb{R}$, then $\lim_{t\rightarrow\infty}y(t)>\delta$.
The solution to $\frac{\partial}{\partial t}z(t)=z(t)\cdot t^{-1},z(t_0)=c$ is $z(t):=ct_0\cdot t^{-1}$ with limit $0$ as $t\rightarrow\infty$. Thus,
\begin{equation}
\lim_{t\rightarrow\infty}y(t)=\lim_{t\rightarrow\infty}\mathbb{E}\left(\frac{Z_t}{t}\mathbb{1}_{T_0>t}\conditional~{} d_0\right)>0
\end{equation}
Therefore, there exists an $\alpha>0$ such that $\mathbb{E}\left(\frac{D_t}{4\theta^2}\mathbb{1}_{T_0>t}\conditional~{} d_0\right)>\alpha t$ for all $t$ larger than some $t_0$. In turn, looking back at \eq{eq:dtgdl}, this implies that $\Pr(T_0>t\conditional~{} d_0)=\mathbb{E}(\mathbb{1}_{T_0>t}\conditional~{} d_0)=\mathcal{M}_{Z_t}(0)$ does not converge to $0$.
\end{proof}

\bibliography{references}

\end{document}